\title{Parallel algorithm for pattern matching problems under substring consistent equivalence relations} 
\titlerunning{Parallel algorithm for pattern matching problems under SCERs} 
\author{Davaajav~Jargalsaikhan}{Graduate School of Information Sciences, Tohoku University, Sendai, Japan}{davaajav\_jargalsaikhan@shino.ecei.tohoku.ac.jp}{}{}
\author{Diptarama~Hendrian}{Graduate School of Information Sciences, Tohoku University, Sendai, Japan}{diptarama@tohoku.ac.jp}{https://orcid.org/0000-0002-8168-7312}{JSPS KAKENHI Grant Number JP19K20208}
\author{Ryo~Yoshinaka}{Graduate School of Information Sciences, Tohoku University, Sendai, Japan}{ryoshinaka@tohoku.ac.jp}{https://orcid.org/0000-0002-5175-465X}{JSPS KAKENHI Grant Numbers JP18K11150 and JP20H05703}
\author{Ayumi~Shinohara}{Graduate School of Information Sciences, Tohoku University, Sendai, Japan}{ayumis@tohoku.ac.jp}{https://orcid.org/0000-0002-4978-8316}{JSPS KAKENHI Grant Number JP21K11745}
\authorrunning{D. Jargalsaikhan et al.}
\keywords{parallel algorithm, substring consistent equivalence relation, pattern matching} 
\DeclarePairedDelimiter{\ceil}{\lceil}{\rceil}
\DeclarePairedDelimiter{\floor}{\lfloor}{\rfloor}
\newcommand{\hati}{\hat{\imath}}
\newcommand{\hatj}{\hat{\jmath}}
\newcommand{\mrm}[1]{\mathrm{#1}}
\newcommand{\mcal}[1]{\mathcal{#1}}
\newcommand{\True}{\mathit{True}}
\newcommand{\False}{\mathit{False}}
\newcommand{\LCP}{\mathit{LCP}}
\newcommand{\Head}{\mathit{Head}}
\newcommand{\Tail}{\mathit{Tail}}
\newcommand{\tail}{\mathit{tail}}
\newcommand{\oldtail}{\mathit{old\_tail}}
\newcommand{\prev}[1]{\mathit{prev}_{#1}}
\newcommand{\rem}{\mathit{s}}
\begin{document}

\maketitle

\begin{abstract}
    Given a text and a pattern over an alphabet,
    the pattern matching problem searches for all occurrences of the pattern in the text.
    An equivalence relation $\approx$ is called a substring consistent equivalence relation (SCER),
    if for two strings $X$ and $Y$, $X \approx Y$ implies $|X| = |Y|$ and $X[i:j] \approx Y[i:j]$ for all $1 \le i \le j \le |X|$.
    In this paper, we propose an efficient parallel algorithm for pattern matching under any SCER using the ``duel-and-sweep'' paradigm.
    For a pattern of length $m$ and a text of length $n$,
    our algorithm runs in $O(\xi_m^\mrm{t} \log^2 m)$ time and $O(\xi_m^\mrm{w} \cdot n \log^2 m)$ work,
    with $O(\tau_n^\mrm{t} + \xi_m^\mrm{t} \log^2 m)$ time and $O(\tau_n^\mrm{w} + \xi_m^\mrm{w} \cdot m \log^2 m)$ work preprocessing
    on the Priority Concurrent Read Concurrent Write Parallel Random-Access Machines (P-CRCW PRAM),
	where $\tau_n^\mrm{t}$, $\tau_n^\mrm{w}$, $\xi_m^\mrm{t}$, and $\xi_m^\mrm{w}$ are parameters dependent on SCERs, 
	which are often linearly bounded in $n$ and $m$, respectively.
\end{abstract}

\section{Introduction}

The string matching problem is fundamental and widely studied in computer science. 
Given a text and a pattern, the string matching problem searches for all substrings of the text that match the pattern.
Many matching functions that are used in different string matching problems, including exact~\cite{knuth1977fast}, parameterized~\cite{baker1996parameterized}, order-preserving~\cite{kim2014order, kubica2013linear} and cartesian-tree~\cite{park2019cartesian} matchings,
fall under the class of \emph{substring consistent equivalence relations} (SCERs)~\cite{matsuoka2016generalized}.
An equivalence relation on strings is an SCER, if two strings $X$ and $Y$ match under the equivalence relation, then they have equal length and
$X[i:j]$ matches $Y[i:j]$, for all $1 \leq i \leq j < |X|$.
Matsuoka et al.~\cite{matsuoka2016generalized} generalized the KMP algorithm~\cite{knuth1977fast} for pattern matching problems under SCERs.
They also investigated periodicity properties of strings under SCERs.
Kikuchi et al.~\cite{kikuchi2020computing} proposed algorithms to compute the shortest and longest cover arrays for a given string under any SCER.
Hendrian~\cite{hendrian2020generalized} generalized Aho-Corasick algorithm for the dictionary matching under SCERs.

Vishkin proposed two algorithms for exact pattern matching,
pattern matching by dueling~\cite{vishkin1985optimal}
and pattern matching by sampling~\cite{vishkin1991deterministic}.
Both algorithms match the pattern to a substring of the text from some positions which are determined by the property of the pattern,
instead of its prefix or suffix as in, for instance, the KMP algorithm~\cite{knuth1977fast}.
These algorithms are developed for parallel processing.

The dueling technique by Vishkin~\cite{vishkin1985optimal} has been proved to be useful for various kinds of pattern matching.
Amir et al.~\cite{amir1994alphabet} proposed a duel-and-sweep algorithm for two-dimensional exact matching, which is named ``consistency and verification''.
Cole et al.~\cite{cole2014two} extended it to two-dimensional parameterized matching.
In addition, Jargalsaikhan et al.~\cite{jargalsaikhan2018duel,jargalsaikhan2020parallel} proposed serial and parallel duel-and-sweep algorithms for order-preserving matching.

In this paper, we propose an efficient parallel algorithm based on the dueling technique for the pattern matching problem under SCERs.
Our parallel algorithm is the first to solve the problem under an arbitrary SCER in parallel. 
While Vishkin's dueling algorithm for exact matching depends on the preferable properties of periods of strings, many of those do not hold with SCERs.
Therefore, our algorithm involves new ideas and appears quite different from the original for exact pattern matching. 
For a pattern of length $m$ and a text of length $n$,
our algorithm runs in $O(\xi_m^\mrm{t} \log^2 m)$ time and $O(\xi_m^\mrm{w} \cdot n \log^2 m)$ work,
with $O(\tau_n^\mrm{t} + \xi_m^\mrm{t} \log^2 m)$ time and $O(\tau_n^\mrm{w} + \xi_m^\mrm{w} \cdot m \log^2 m)$ work preprocessing
on the Priority Concurrent Read Concurrent Write Parallel Random-Access Machines (P-CRCW PRAM)~\cite{jaja1992introduction}.
Here, $\tau_n^\mrm{t}$ and $\tau_n^\mrm{w}$ are time and work respectively, needed on P-CRCW PRAM to encode in parallel a string $X$ of length $n$ under the SCER in concern.
Given the encoding of $X$, $\xi_m^\mrm{t}$ and $\xi_m^\mrm{w}$ are time and work respectively to re-encode an element w.r.t.\ some suffix of $X$ of length $m$.
Table~\ref{table:contributions} shows the encoding time and work complexities for some SCERs.
\begin{table}[tb]
	\begin{center}
		\caption{Summary of the encoding complexities for some SCER on P-CRCW PRAM.}
		\label{table:contributions}
		\begin{tabular}{|l|c|c|c|c|}
			\hline
                  & $\tau_n^\mrm{t}$ & $\tau_n^\mrm{w}$ & $\xi_m^\mrm{t}$ & $\xi_m^\mrm{w}$ \\
            \hline \hline
            Exact & $O(1)$ & $O(1)$ & $O(1)$ & $O(1)$ \\
            \hline
            Parametererized & $O(\log n)$ & $O(n \log n)$ & $O(1)$ & $O(1)$\\
            \hline
            Cartesian-tree & $O(\log n)$ & $O(n \log n)$ & $O(1)$ & $O(1)$ \\ 
            \hline
		\end{tabular}
	\end{center}
\end{table}

This manuscript fixes minor errors and improves the algorithm efficiency in~\cite{JargalsaikhanHYS22}.

\section{Preliminaries}

We use $\Sigma$ to denote an alphabet of symbols and $\Sigma^*$ denotes the set of strings over the alphabet $\Sigma$.
For a string $X\in \Sigma^*$, the length of $X$ is denoted by $|X|$.
The \emph{empty string}, denoted by $\varepsilon$, is the string of length $0$.
For a string $X \in \Sigma^*$ of length $n$, $X[i]$ denotes the $i$-th symbol of $X$,
$X[i:j] = X[i]X[i+1] \dots X[j]$ denotes a substring of $X$ that begins at position $i$ and ends at position $j$ for $1 \leq i \leq j \leq n$.
For $i > j$, $X[i:j]$ denotes the empty string.

\begin{definition}[Substring consistent equivalence relation (SCER)~\cite{matsuoka2016generalized}]
	An equivalence relation $\approx \ \subseteq \Sigma^* \times  \Sigma^*$ is a \emph{substring consistent equivalence relation (SCER)} if 
	for two strings $X$ and $Y$, $X \approx Y$ implies $|X| = |Y|$ and $X[i:j] \approx Y[i:j]$ for all $1 \le i \le j \le |X|$.
\end{definition}
For instance, while the parameterized matching~\cite{baker1996parameterized} and
order-preserving matching~\cite{kubica2013linear, kim2014order} are SCERs,
the permutation matching~\cite{butman2004scaled, cicalese2009searching} and function matching~\cite{amir2006function} are not.

Hereafter we fix an arbitrary SCER $\approx$.
We say that a position $i$ is the \emph{tight mismatch position} if $X[1 \mathbin{:} i-1] \approx Y[1 \mathbin{:} i-1]$ and $X[1 \mathbin{:} i] \not\approx Y[1 \mathbin{:} i]$.
For two strings $X$ and $Y$, let $\LCP(X, Y)$ be the length $l$ of the longest prefixes of $X$ and $Y$ match.
That is, $l$ is the greatest integer such that $X[1 : l] \approx Y[1 : l]$.
Obviously, if $i$ is the tight mismatch position for $X \not\approx Y$, then $\LCP(X,Y) = i-1$.
The converse holds if $i \le \min\{|X|,|Y|\}$.
Similarly, for a string $X$ and an integer $0 \leq a < |X|$,
we define $\LCP_X(a) = \LCP(X,X[a+1:|X|])$.
In other words, $\LCP_X(a)$ is the length of the longest common prefix,
when $X$ is superimposed on itself with offset $a$. 
We say $X$ \emph{$\approx$-matches} $Y$ iff $X \approx Y$.
Given a text $T$ of length $n$ and a pattern $P$ of length $m$, a position $i$ in $T$,  $1 \le i \le n - m +1$, is an \emph{$\approx$-occurrence} of $P$ in $T$ iff $P \approx T[i:i+m-1]$.
\begin{definition}[$\approx$-pattern matching]
\mbox{}
	\begin{description}
	\item[Input:] A text $T \in \Sigma^*$ of length $n$ and a pattern $P \in \Sigma^*$ of length $m \le n$.
	\item[Output:] All $\approx$-occurrences of $P$ inside $T$.
	\end{description}
\end{definition}

In the remainder of this paper, we fix text $T$ to be of length $n$ and pattern $P$ to be of length $m$.
We also assume that $n = 2m - 1$.
Larger texts can be cut into overlapping pieces of length that are less than or equal to $(2m - 1)$ and processed independently.
That is, we search for pattern occurrences in each substring $T[1 \mathbin{:} 2m - 1], T[m+1 \mathbin{:} 3m-1], \dotsc, T[\floor{\frac{n-1}{m}} \cdot m + 1 \mathbin{:} n]$, independently.
For an integer $x$ with $1 \leq x \leq n - m +1$, 
a \emph{candidate} $T_x$ is the substring of $T$ starting from $x$ of length $m$, 
i.e., $T_x = T[x \mathbin{:} x+m-1]$.

For SCER matchings 
often it is convenient to encode the strings where $\approx$-equivalence is reduced to the identity.
Amir and Kondratovsky~\cite{amir2019sufficient} showed that every SCER admits an encoding satisfying the following property.\footnote{Lemma~12 in~\cite{amir2019sufficient} does not explicitly mention the third property, but their proof entails it.}
\begin{definition}[$\approx$-encoding]\label{def:encoding}
Let $\Sigma$ and $\Delta$ be alphabets.
We say a function $f:\Sigma^* \rightarrow \Delta^*$ is an \emph{$\approx$-encoding} if
\begin{enumerate}[(1)]
\item for any string $X \in \Sigma^*$, $|X| = |f(X)|$,
\item $f(X[1:i]) = f(X)[1:i]$ for any $i \le |X|$, 
\item for two strings $X$ and $Y$ of equal length $k$, $f(X)[i] = f(Y)[i]$ implies $f(X[j+1:k])[i-j] = f(Y[j+1:k])[i-j]$ for any $j < i \leq k$, and
\item $f(X) = f(Y)$ iff $X \approx Y$.
\end{enumerate}
\end{definition}
\begin{restatable}{proposition}{encoding}\label{prop:encoding}
An equivalence relation $\approx$ is an SCER if and only if it admits an $\approx$-encoding.
\end{restatable}
\begin{proof}
It suffices to show the ``if'' direction.
Suppose we have an $\approx$-encoding $f$.
If $X \approx Y$, then $f(X)=f(Y)$ by (4) of Definition~\ref{def:encoding}.
In this case, we have $f(X[j:k])[i]=f(Y[j:k])[i]$ for any $1 \le j \le k \le |X|$ and $1 \le i \le k-j+1$ by $f(X)[i+j-1]=f(Y)[i+j-1]$, (3), and (2).
Hence, $X[j:k] \approx Y[j:k]$ by (4).
\end{proof}
Standard encodings of SCERs often satisfy the above definition, such as the prev-encoding~\cite{baker1996parameterized} for parameterized matching and parent-distance encoding~\cite{park2019cartesian} for cartesian-tree matching.
However, the nearest neighbor encoding~\cite{kim2014order} for order-preserving matching violates the third condition. 
Our algorithm for $\approx$-pattern matching proposed in this paper relies on the property of Definition~\ref{def:encoding} and does not work with the nearest neighbor encoding.
Nonetheless, duel-and-sweep algorithms for order-preserving matching based on the encoding are possible by further elaboration~\cite{jargalsaikhan2018duel, jargalsaikhan2020parallel}, but we will not discuss it in this paper.

Fixing an $\approx$-encoding $f$, we denote $f(X)$ by $\widetilde{X}$ for simplicity.
In addition, we denote the encoding of $X[x:|X|]$ as $\widetilde{X}_{x} = f(X[x:|X|])$. Thus $\widetilde{X}_{1}=\widetilde{X}$.
For a string $X$, we suppose that $\widetilde{X}$ can be computed in $\tau_{|X|}^\mrm{t}$ time and $\tau_{|X|}^\mrm{w}$ work in parallel on P-CRCW PRAM.
Moreover, we assume that given $\widetilde{X}$, $x$, and $k$ such that $x+k-1 \le |X|$, to compute $\widetilde{X}_x[k]$, i.e.\ re-encoding the element at position $k$ with respect to suffix $X[x:|X|]$, takes $\xi_{k}^\mrm{t}$ time and $\xi_{k}^\mrm{w}$ work on P-CRCW PRAM.
Of course, one can obtain the value $\widetilde{X}_x[k]$ by compute the whole $\widetilde{X}_x[:k]$ in $\tau_{k}^\mrm{t}$ time and $\tau_{k}^\mrm{w}$ work, but re-encoding a single position is usually much cheaper.
Those parameters are often reasonably small.
See Table~\ref{table:contributions} and Appendix~\ref{app:sec:encoding} for the prev-encoding for parameterized matching and the parent-distance encoding for cartesian-tree matching.

Vishkin's dueling technique essentially depends on the preferable properties of periods of strings.
Matsuoka et al.~\cite{matsuoka2016generalized} have discussed in detail how the classical notion of periods and their properties can be generalized when considering SCER matching.
Unfortunately, none of the generalizations yield a straightforward adaptation of Vishkin's algorithm for SCER matching.
Among those, the kind of periods involved in the duel-and-sweep algorithm discussed in this paper is \emph{border-based period}.
\begin{definition}[Border-based period]\label{def:border-based}
	Given a string $X$ of length $n$, positive integer $p < n$ is called a \emph{border-based period} of $X$ if $X[1 \mathbin{:} n-p]  \approx X[p+1 \mathbin{:} n]$. 
\end{definition}
Throughout the rest of the paper, we will refer to a border-based period as a \emph{period}.

The family of models of computation used in this work is the priority concurrent-read concurrent-write (P-CRCW) PRAM~\cite{jaja1992introduction}.
This model allows simultaneous reading from the same memory location as well as simultaneous writing. 
In case of multiple writes to the same memory cell, the P-CRCW PRAM grants access to the memory cell to the processor with the smallest index.

\section{Parallel algorithm for pattern matching under SCERs}

We give an overview of the duel-and-sweep algorithm~\cite{amir1994alphabet,vishkin1985optimal}.
The pattern is first preprocessed to obtain a \emph{witness table}, which is later used to prune candidates during the pattern searching.
As the name suggests, in the duel-and-sweep algorithm, the pattern searching is divided into two stages: the \emph{dueling stage} and the \emph{sweeping stage}.
The pattern searching algorithm prunes candidates that cannot be pattern occurrences, first by performing ``duels'' between them, and then by ``sweeping'' through the remaining candidates to obtain pattern occurrences. 

First, we explain the idea of dueling.
Suppose $P$ is superimposed on itself with an offset $a < m$ and the two overlapped regions of $P$ do not match under $\approx$.
Then it is impossible for two candidates $T_x$ and $T_{x+a}$ with offset $a$ to match $P$ simultaneously (see Figure~\ref{fig:overview_dueling}).
The dueling stage lets each pair of candidates with such offset $a$ ``duel'' and eliminates one based on this observation,
so that if candidate $T_x$ gets eliminated during the dueling stage, then $T_x \not\approx P$.
However, the opposite does not necessarily hold true: $T_x$ surviving the dueling stage does not mean that $T_x \approx P$. 
On the other hand, it is guaranteed that if distinct candidates $T_x$ and $T_{x+a}$ that survive the dueling stage overlap, then the suffixes of $T_x$ and $P$ of length $m-a$ match if and only if so do the prefixes of $T_{x+a}$ and $P$ of the same length.
The sweeping stage takes advantage of this property when checking whether surviving candidates and the pattern match, so that this stage can also be done quickly.

Prior to the dueling stage, the pattern is preprocessed to construct a \emph{witness table} based on which the dueling stage decides which pair of overlapping candidates should duel and how they should duel.
For each offset $0 \leq a < m$,
when the overlapped regions obtained by superimposing $P$ on itself with offset $a$ do not match, we need only one position $i$ to say that
the overlapping regions do not match.
We say that $w$ is a \emph{witness for the offset $a$} if $\widetilde{P}_{a+1}[w] \neq \widetilde{P}[w]$.
We denote by $\mathcal{W}_P(a)$ the set of all witnesses for offset $a$.
We say a witness $w$ for offset $a$ is \emph{tight} if $w = \min \mathcal{W}_P(a)$.
Obviously, $\mathcal{W}_P(a) = \emptyset$ if and only if $a=0$ or $a$ is a period of $P$.
A \emph{witness table} $W[0 \mathbin{:} m-1]$ is an array such that $W[a] \in \mathcal{W}_P(a)$ if $\mathcal{W}_P(a) \neq \emptyset$.
When the overlap regions match for offset $a$, which implies that no witness exists for $a$,
we express it as $W[a] = 0$.

\begin{algorithm2e}[tb]
	\caption{Dueling with respect to $S$. There is one survivor assuming $x$ is not consistent with $y$.}
	\label{alg:dueling}
	\SetVlineSkip{0.5mm}
    \Fn{\Dueling{$\widetilde{S}, x, y$}}{
        $w \leftarrow W[y-x]$\;
		\lIf{$\widetilde{S}_y[w] = \widetilde{P}[w]$}{%
			\KwRet{$y$}%
		}
		\lElse{\KwRet{$x$}}
	}
\end{algorithm2e}

\begin{figure}[t]
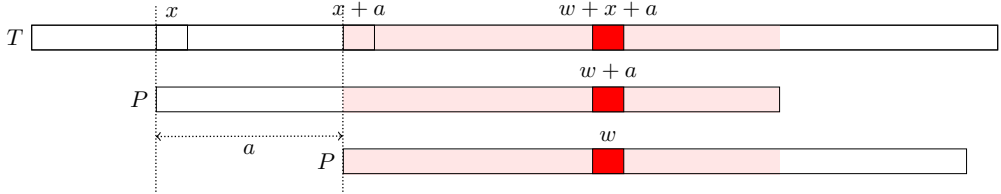

	\centering
	\includestandalone[scale=.82]{figures/overview_dueling}
	\caption{
		If $T_x \approx P \approx T_{x+a}$, then the overlapped regions of $P$ superimposed on itself with offset $a$ should match, i.e., $P_{a+1}[1:m-a]\approx P[1:m-a]$.
		If the overlapped region does not match, there must be a witness $w$ such that $\widetilde{P}_{a+1}[w] \neq \widetilde{P}[w]$.
		Candidate positions $x$ and $x+a$ perform a duel using the witness $w$ based on Lemma~\ref{lem:duel}.}
	\label{fig:overview_dueling}
\end{figure}

More formally, in the dueling stage, we ``duel'' positions $x$ and $x+a$ such that $\mathcal{W}_P(a) \neq \emptyset$ based on the following observation (see Figure~\ref{fig:overview_dueling}).
\begin{restatable}{lemma}{duel}\label{lem:duel}
Suppose $w \in \mathcal{W}_P(a)$. Then, 
\begin{itemize}
	\item if $\widetilde{T}_{x+a}[w] = \widetilde{P}[w]$, then $T_{x} \not\approx P$,
	\item if $\widetilde{T}_{x+a}[w] \neq \widetilde{P}[w]$, then $T_{x+a} \not\approx P$.
\end{itemize}
\end{restatable}
\begin{proof}
If $\widetilde{T}_{x+a}[w] \neq \widetilde{P}[w]$, then by the fourth property of the $\approx$-encoding (Definition~\ref{def:encoding}), $T_{x+a} \not\approx P$.
If $\widetilde{T}_{x+a}[w] = \widetilde{P}[w] \neq \widetilde{P}_{a+1}[w]$,
then by the third property of the $\approx$-encoding, $\widetilde{T}_{x}[w+a] \neq \widetilde{P}[w+a]$, so $T_x \not\approx P$.
\end{proof}

Based on this lemma, we can safely eliminate either candidate $T_x$ or $T_{x+a}$ without looking into other positions.
This process is called \emph{dueling} and described as Algorithm~\ref{alg:dueling}.
Since it compares just a single positon, it runs in $O(\xi^{\mrm{t}}_m)$ time and $O(\xi^{\mrm{w}}_m)$ work assuming that $\widetilde{S}$, $\widetilde{P}$, and $W$ have already been computed.
On the other hand, if the offset $a$ has no witness, i.e.\ $P[1 \mathbin{:} m-a] \approx P[a+1 \mathbin{:} m]$, no dueling is performed on them.
We say that a position \emph{$x$ is consistent with $x+a$} if $\mathcal{W}_P(a) = \emptyset$.


After the dueling stage, all surviving candidate positions are pairwise consistent.
The dueling stage algorithm makes sure that no occurrence gets eliminated during the dueling stage.
Taking advantage of the fact that surviving candidates from the dueling stage are pairwise consistent, the sweeping stage prunes them until all remaining candidates match the pattern.
By ensuring pairwise consistency of the surviving candidates, the pattern searching algorithm reduces the number of comparisons at a position in the text during the sweeping stage.

Hereinafter, in our pseudo-codes we will use ``$\leftarrow$'' to note assignment operation into a local variable of a processor or assignment operation into a global variable which is
accessed by a single processor at a time.
We will use ``$\Leftarrow$'' to note assignment operation into a global variable which is accessible from multiple processors simultaneously.
In case of a write
conflict, the processor with the smallest index succeeds in writing into the memory.

\subsection{Pattern preprocessing}

The goal of the preprocessing stage is to compute a witness table
$W[0 \mathbin{:} m-1]$, where $W[a] = 0$ if $\mathcal{W}_P(a) = \emptyset$, and
$W[a] \in \mathcal{W}_P(a)$ otherwise.
Algorithm~\ref{alg:check_SCER_parallel} computes the tight mismatch position for $X$ and $Y$, given $\widetilde{X}$ and $\widetilde{Y}$.
\begin{restatable}{lemma}{checkparallel}\label{lem:check_parallel}
	For strings $X$ and $Y$ of equal length, given $\widetilde{X}$ and $\widetilde{Y}$, Algorithm~\ref{alg:check_SCER_parallel} computes the tight mismatch position in $O(1)$ time and $O(|X|)$ work on the P-CRCW PRAM.
\end{restatable}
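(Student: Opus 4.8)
The plan is to show that the tight mismatch position of $X$ and $Y$ is exactly the smallest index at which the encodings $\widetilde{X}$ and $\widetilde{Y}$ disagree, and that such a minimum can be extracted in constant time by exploiting the write-conflict rule of the Priority CRCW PRAM.

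First I would translate the combinatorial condition into the encodings. By property~(2) of Definition~\ref{def:encoding}, $\widetilde{X}[1:i]$ is the encoding of the prefix $X[1:i]$ and likewise $\widetilde{Y}[1:i]$ is the encoding of $Y[1:i]$; by property~(1) these prefixes have equal length, and by property~(4) we have $X[1:i] \approx Y[1:i]$ if and only if $\widetilde{X}[1:i] = \widetilde{Y}[1:i]$, i.e.\ if and only if $\widetilde{X}[j] = \widetilde{Y}[j]$ for every $j \le i$. Hence a position $i$ is a tight mismatch position for $X$ and $Y$ --- meaning $X[1:i-1] \approx Y[1:i-1]$ but $X[1:i] \not\approx Y[1:i]$ --- precisely when $i = \min\{\, j : \widetilde{X}[j] \neq \widetilde{Y}[j]\,\}$; if the two encodings coincide, no such position exists and the algorithm reports $0$ (recall that by property~(4), $\widetilde{X} = \widetilde{Y}$ is equivalent to $X \approx Y$).

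Next I would describe the parallel computation: allocate one processor to each position $j \in \{1, \dots, |X|\}$. Processor $j$ reads $\widetilde{X}[j]$ and $\widetilde{Y}[j]$, compares them, and, only if they differ, writes $j$ into a single shared output cell initialized to $0$. Each processor performs one comparison and at most one write, so the running time is $O(1)$; with $|X|$ processors doing $O(1)$ work each, the total work is $O(|X|)$. Correctness of the output cell follows from the model: on the Priority CRCW PRAM the contending processor of smallest index wins, and processor $j$ contends exactly when $\widetilde{X}[j] \neq \widetilde{Y}[j]$, so the value that survives is $\min\{\, j : \widetilde{X}[j] \neq \widetilde{Y}[j]\,\}$ (or $0$ if there is no contention), which by the previous paragraph is the tight mismatch position (or $0$ when $X \approx Y$).

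There is no genuinely difficult step here; the only points needing care are invoking the correct clauses of Definition~\ref{def:encoding} to pass from $\approx$-matching of prefixes to pointwise equality of the encodings, and aligning the ``smallest index wins'' semantics of the Priority CRCW PRAM with the ``take the minimum mismatch position'' requirement so that the single-cell write yields the tight (rather than an arbitrary) mismatch position.
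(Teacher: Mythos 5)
Your proposal is correct and follows essentially the same route as the paper's own proof: one processor per position writes its index into a shared cell on a mismatch of $\widetilde{X}$ and $\widetilde{Y}$, and the priority write rule of the P-CRCW PRAM guarantees the smallest such index (hence the tight mismatch position) survives, in $O(1)$ time and $O(|X|)$ work. Your additional step of invoking properties (1), (2) and (4) of Definition~\ref{def:encoding} to identify the tight mismatch position with the least index where the encodings differ is merely an explicit spelling-out of what the paper leaves implicit.
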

\begin{proof}
	In Algorithm~\ref{alg:check_SCER_parallel}, 
	for each element of $X$, we ``attach'' a processor to each position of $X$.
	If $\widetilde{X}[i] \neq \widetilde{Y}[i]$ for some $i$, the corresponding processor tries to update the shared variable $w$.
	Recall that in P-CRCW PRAM, the processor with the lowest index will succeed in writing into $w$.
	Thus, at the end of the algorithm $w$ contains the tight mismatch position.
\end{proof}

One can compute a witness table naively inputting $\widetilde{P}[1:m-a]$ and $\widetilde{P}_{a+1}[1:m-a]$ for all the offsets $a < m$ to Algorithm~\ref{alg:check_SCER_parallel}.
However, this naive method costs as much as $\Omega(\xi_m^\mrm{w} \cdot m^2)$ work.
We will present a more efficient algorithm in this subsection.

\begin{algorithm2e}[t]
	\caption{Check in parallel whether $X$ and $Y$ match, given $\widetilde{X}$ and $\widetilde{Y}$. If they do not match, it returns the tight witness.}
	\label{alg:check_SCER_parallel}
	\Fn(){\CheckParallel{$\widetilde{X}, \widetilde{Y}$}}{
		$w \leftarrow 0$\;
		\ForPar{\textbf{each} $i \in \{1, \dotsc, |X|\}$}{
			\lIf{$\widetilde{X}[i] \neq \widetilde{Y}[i]$}{%
				$w \Leftarrow i$%
			}
		}
		\KwRet{$w$}\;
	}
\end{algorithm2e}

Our pattern preprocessing algorithm is described in Algorithm~\ref{alg:preprocessing_parallel}
and its outline is illustrated in Figure~\ref{fig:preprocessing_invariant}.
Initially, all entries of the witness table are set to zero.
Throughout preprocessing, each element of $W$ is updated at most once.
Therefore, at any point of the execution of the preprocessing algorithm, if $W[i] \neq 0$, then it must hold $W[i] \in \mcal{W}_P(i)$.
We say that position $i$ is \emph{finalized} if $W[i] = 0$ implies $\mcal{W}_P(i) = \emptyset$ and $W[i] \neq 0$ implies $W[i] \in \mcal{W}_P(i)$.
During the execution of Algorithm~\ref{alg:preprocessing_parallel}, the table is divided into two parts.
The \emph{head} is a prefix of a certain length and the \emph{tail} is the rest suffix.
Let us write the head and the tail at the round $k$ of the while loop by $\Head_k$ and $\Tail_k$, respectively. 
The variable $\tail$ in Algorithm~\ref{alg:preprocessing_parallel} represents the starting position of the tail, or equivalently, the length of the head.
Throughout the algorithm execution, the tail part is always finalized.
On the other hand, though the zero entries of the head are not necessarily reliable, such zero positions become fewer and fewer.
Consider partitioning the head into blocks of size $2^k$. 
We will call each block a \emph{$2^k$-block}, with the last $2^k$-block possibly being shorter than $2^k$.
That is, the $2^k$-blocks are $W[i\cdot 2^k \mathbin{:} (i+1) \cdot 2^k-1]$ for $i=0,\dots, \floor{h/2^k} - 1$ and $W[\floor{h/2^k} \cdot 2^k \mathbin{:} h-1]$ where $h=|\Head_k|$ is the size of the head.
We say that $W[0 \mathbin{:} x]$ is \emph{$2^k$-sparse} if every $2^k$-block of $W[0:x]$ contains exactly one zero entry possibly except that the last $2^k$-block has no zero entry.
We will guarantee that $\Head_k$ is $2^k$-sparse.
Note that when the head is $2^k$-sparse, the unique zero position of the first $2^k$-block $W[0 \mathbin{:} 2^k - 1]$ is always $0$ ($W[0] = 0$) and $W[1 \mathbin{:} 2^k - 1]$ contains no zeros.
\begin{algorithm2e}[tb]
	\caption{Parallel algorithm for the pattern preprocessing.}
	\label{alg:preprocessing_parallel}
    \Fn(){\PreprocessingParallel{}}{
        $\tail \leftarrow m, \ k \leftarrow 0$\tcc*{$\tail$ is the starting position of $\Tail_k$}
        \While{$2^k \leq \tail$}{
            $p \leftarrow $ \GetZeros{$2^{k}, 2^{k+1} - 1, k$}$[0]$\;\label{alg:line:find_pk}
            $W[p] \leftarrow $ \CheckParallel{$\widetilde{P}[1 : m - p], \widetilde{P}_{p+1}[1 : m - p]$}\;\label{alg:line:finalize_pk}
            \lIf{$W[p] = 0$}{%
                $\mathit{lcp} \leftarrow m - p$%
            }\lElse{%
                $\mathit{lcp} \leftarrow W[p] - 1$%
            }
            $\oldtail \leftarrow \tail$\;
            $\tail \leftarrow \min(\oldtail -  2^k, m - \mathit{lcp})$\;
            \SatisfyHeadSparsity{$\tail-1, k$}\;\label{alg:line:spss}
            \FinalizeTail{$\tail, \oldtail, p,k$}\;
            $k \leftarrow k+1$\;
        }
    }
\end{algorithm2e}

Initially, the entire table is the head and the size of the tail is zero: $\Head_0 = W$ and $\Tail_0 = \varepsilon$.
The head is shrunk and the tail is extended by the following rule.
Let the \emph{suspected period} $p_k$ at round $k$ be the first zero position after the index $0$, i.e., $p_k$ is the unique position in the second $2^k$-block such that $W[p_k]=0$.
Then, we let $\mathit{Head}_{k+1} = W[0 : m-x-1]$ and $\Tail_{k+1} = W[m - x : m-1]$ for $x = |\Tail_{k+1}| =  \max(|Tail_{k}| + 2^{k},\ \LCP_P(p_k))$.
When $|\Head_k| < 2^k$, the $2^k$-sparsity means that all the positions in the witness table are finalized.
So, Algorithm~\ref{alg:preprocessing_parallel} exits the while loop and halts.
The goal of this subsection is to show the following theorem.
\begin{figure}[t]
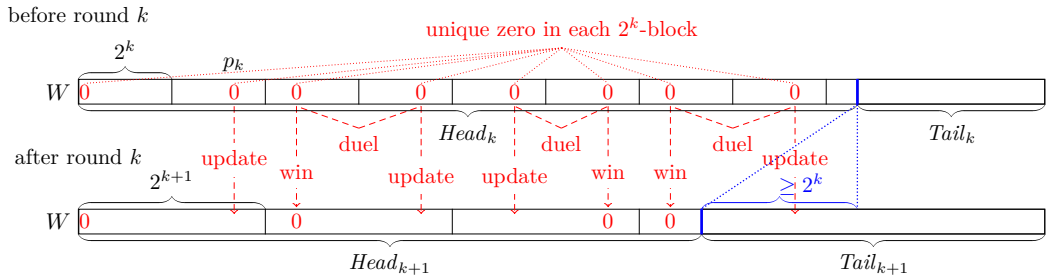

	\centering
	\includestandalone[scale=.82]{figures/preprocessing_invariant}
	\caption{Illustration of the preprocessing invariant. $W$ is partitioned into head and tail.
	The head is $2^k$-sparse and the tail is finalized.
	The $2^k$-sparsity is achieved by duels.
	The tail grows by at least $2^k$ at each round.
	\label{fig:preprocessing_invariant}}
\end{figure}

\begin{restatable}{theorem}{preprocessingcomplexity}
\label{th:preprocessing_complexity}
	Given $\widetilde{P}$, the pattern preprocessing Algorithm~\ref{alg:preprocessing_parallel} computes a witness table in $O(\xi_m^\mrm{t} \cdot \log^2 m)$ time and $O(\xi_m^\mrm{w} \cdot m \log^2 m)$ work on the P-CRCW PRAM.
\end{restatable}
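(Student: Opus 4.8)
The plan is to bound the cost of Algorithm~\ref{alg:preprocessing_parallel} by accounting for the work and time spent in each round $k$ of the while loop, and then summing over the rounds. First I would observe that the loop runs for at most $O(\log m)$ rounds, since $\tail$ starts at $m$ and the termination condition is $2^k \leq \tail$, so $k$ never exceeds $\lceil \log_2 m\rceil$. Within a single round $k$, the work is split among the following: the call to \GetZeros at Line~\ref{alg:line:find_pk}, which by the remark after Algorithm~\ref{alg:get_zeros} costs $O(1)$ time and $O(2^{k+1}) = O(m)$ work in the worst case; the single call to \CheckParallel at Line~\ref{alg:line:finalize_pk} to finalize $p_k$, which by Lemma~\ref{lem:check_parallel} costs $O(1)$ time and $O(\xi_m^\mrm{w} \cdot m)$ work (the factor $\xi_m^\mrm{w}$ enters because the input is a re-encoding $\widetilde{P}_{p+1}$, costing $\xi_m^\mrm{w}$ work per position, over $m$ positions); the call to \SatisfyHeadSparsity, which by Lemma~\ref{lem:satisfy_sparsity} costs $O(\xi_m^\mrm{t})$ time and $O(\xi_m^\mrm{w} \cdot m/2^k)$ work; and the call to \FinalizeTail, which by Lemma~\ref{lem:extend_tail} costs $O(\xi_m^\mrm{t}\log m)$ time and $O(\xi_m^\mrm{w}\cdot m\log m)$ work.

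The dominant contribution per round is \FinalizeTail, at $O(\xi_m^\mrm{t}\log m)$ time and $O(\xi_m^\mrm{w}\cdot m\log m)$ work. Summing the time over $O(\log m)$ rounds gives $O(\xi_m^\mrm{t}\log^2 m)$, and summing the work gives $O(\xi_m^\mrm{w}\cdot m\log^2 m)$, matching the claimed bounds. I would also check that the other per-round costs are dominated: the \SatisfyHeadSparsity work $\sum_k O(\xi_m^\mrm{w}\cdot m/2^k) = O(\xi_m^\mrm{w}\cdot m)$ is a geometric sum and hence negligible, while \GetZeros and the single \CheckParallel at Line~\ref{alg:line:finalize_pk} contribute $O(\xi_m^\mrm{w}\cdot m)$ work per round and $O(\xi_m^\mrm{w}\cdot m\log m)$ total, again dominated. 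For time, \SatisfyHeadSparsity contributes $O(\xi_m^\mrm{t})$ per round for a total of $O(\xi_m^\mrm{t}\log m)$, and the $O(1)$-time pieces total $O(\log m)$, both dominated by $O(\xi_m^\mrm{t}\log^2 m)$.

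The remaining ingredient is correctness, namely that upon termination every entry $W[a]$ is finalized, i.e., $W[a]=0 \iff \mathcal{W}_P(a)=\emptyset$ and otherwise $W[a]\in\mathcal{W}_P(a)$. This follows from the loop invariant developed in the preceding subsections: at the start of round $k$ the tail $\Tail_k$ is finalized and $\Head_k$ is $2^k$-sparse with all its nonzero entries correct (Lemmas~\ref{lem:headwitnesses}, \ref{lem:witness_limit}, \ref{lem:smallwitness}, \ref{lem:satisfy_sparsity} maintain the head invariant; Lemma~\ref{lem:tail_prop} and Lemma~\ref{lem:extend_tail} maintain the tail invariant). When the loop exits, $2^k > \tail = |\Tail_k|$, so $|\Head_k| = m - |\Tail_k| < 2^k$; the $2^k$-sparsity of a head shorter than $2^k$ forces it to contain at most one zero, which must be $W[0]$, so all head positions are finalized as well, and since the tail is finalized by invariant, the whole table is finalized.

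I expect the main obstacle to be purely bookkeeping rather than conceptual: one must be careful that the per-round work charged to $\tail$-dependent quantities (such as the \GetZeros range $2^{k+1}-1$ and the \CheckParallel inputs of length $m-p$) is correctly bounded by $O(m)$ or $O(\xi_m^\mrm{w}\cdot m)$ uniformly, and that the $O(\log m)$ round bound genuinely holds — which it does because $k$ strictly increments each round and the guard $2^k \le \tail \le m$ caps $k$. The geometric-series telescoping for \SatisfyHeadSparsity must also be stated cleanly so it is visibly absorbed into the final bound. None of these steps is deep, so the proof will be short, essentially a tabulation of the lemma costs followed by two geometric sums.
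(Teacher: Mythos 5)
Your proposal is correct and follows essentially the same route as the paper's proof: the while loop runs $O(\log m)$ rounds, each dominated by the $O(\xi_m^\mrm{t}\log m)$ time and $O(\xi_m^\mrm{w}\cdot m\log m)$ work of Lemmas~\ref{lem:satisfy_sparsity} and~\ref{lem:extend_tail}, and at termination the $2^k$-sparsity of a head shorter than $2^k$ together with the head/tail invariants shows every entry of $W$ is finalized. Two cosmetic slips, neither affecting the bounds: $\tail$ is the length of the \emph{head} (so the exit condition gives $|\Head_k|=\tail<2^k$ directly, not via $m-|\Tail_k|$), and the call at Line~\ref{alg:line:finalize_pk} costs $O(\xi_m^\mrm{t})$ time rather than $O(1)$ once the re-encoding of $\widetilde{P}_{p+1}$ is charged.
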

\begin{proof}
By Lemmas~\ref{lem:satisfy_sparsity} and~\ref{lem:extend_tail}.
\end{proof}

\begin{algorithm2e}[t]
	\caption{Assuming that $W[0 \mathbin{:} r]$ is $2^k$-sparse, returns positions of zeros in $W[l:r]$.}
	\label{alg:get_zeros}
	\Fn(){\GetZeros{$l, r, k$}}{
		\textbf{create} array $A[0:\floor{r/2^k} - \floor{l/2^k}]$ and initialize elements to $-1$\;
		\ForPar{\textbf{each} $i \in \{l, \dotsc, r\}$}{%
            \lIf{$W[i] = 0$}{
				$A[\floor{i/2^k} - \floor{l/2^k}] \Leftarrow i$%
			}
        }
		\KwRet{$A$}\;
	}
\end{algorithm2e}

In the remainder of this subsection, we explain how to maintain the $2^k$-sparsity of the head and finalize the tail.
Before going into the detail, we prepare a technical function \GetZeros{$l,r,k$} in Algorithm~\ref{alg:get_zeros}, which returns positions $i \in \{l, \dotsc, r\}$ such that $W[i] = 0$ in an array, assuming that $W[0 \mathbin{:} r]$ satisfies the $2^k$-sparsity.
Algorithm~\ref{alg:get_zeros} runs in $O(1)$ time and $O(r)$ work on the P-CRCW PRAM.

\paragraph*{Comparison with Vishkin's algorithm}
The preprocessing algorithm for exact matching by Vishkin~\cite{vishkin1985optimal} also constructs a witness table so that it satisfies the $2^k$-sparsity, incrementing $k$, where it has no head/tail separation.
Maintaining the $2^k$-sparsity for the whole table is possible due to the periodicity property which holds for the exact identity but not for general SCERs.
Let $p \le \floor{i/2}$ be the shortest period of $P[:i]$ for some $i$.
In exact matching $P[i] \ne P[i+j-1]$ implies $P[i-p] \ne P[i+j-1]$.
Thus, we can update $W[j+p]$ by using $W[j]$, i.e., we may let $W[j+p] = W[j] - p$.
However, this property does not hold on SCERs generally.
Still, Vishkin's technique for keeping the $2^k$-sparsity can partially be applied to SCER cases under a certain condition (Lemma~\ref{lem:headwitnesses}).
To fulfill the condition, we control the length of the head part carefully.
Concerning the tail part, where Vishkin's technique does not work, we design a new efficient algorithm for computing witnesses.

\subsubsection*{Head invariant}
First we discuss how the algorithm makes $\Head_k$ $2^k$-sparse.
We maintain the head so that at the beginning of round $k$ of Algorithm~\ref{alg:preprocessing_parallel}, it satisfies the following invariant properties.
\begin{itemize}
	\item $\Head_k$ is $2^k$-sparse.
	\item For all positions $i$ of $\Head_k$,
	\begin{itemize}
	    \item $W[i] \neq 0$ implies $W[i] \in \mcal{W}_P(i)$,
	    \item $W[i] \leq |\Tail_k| + 2^k$.
    \end{itemize}
\end{itemize}
\begin{algorithm2e}[tb]
	\caption{Satisfy $2^{k+1}$-sparsity of $\Head_{k+1} = W[0 \mathbin{:} x]$.}
	\label{alg:satisfy_sparsity}
	\SetVlineSkip{0.5mm}
	\Fn{\SatisfyHeadSparsity{$x, k$}}{
        $A \leftarrow $ \GetZeros {$2^{k+1}, x, k$}\;\label{alg:line:ssgz}
        \ForPar{\textbf{each} $i \in \{0,1, \dotsc, \floor{|A|/2-1}\}$}{
            $j_1 \leftarrow A[2i],\ j_2 \leftarrow A[2i+1]$\;
            \If{$j_1 \neq -1$ and $j_2 \neq -1$}{
                $surv \leftarrow $ \Dueling{$\widetilde{P}, j_1, j_2$}\;
                $a \leftarrow j_2 - j_1$\;
                \lIf{$surv = j_1$}{%
                    $W[j_2] \Leftarrow W[a]$%
                }
                \lIf{$surv = j_2$}{%
                    $W[j_1] \Leftarrow W[a] + a$%
                }
            }
        }
	}
\end{algorithm2e}

The head maintenance procedure \SatisfyHeadSparsity is described in Algorithm~\ref{alg:satisfy_sparsity}.
Before calling the function \SatisfyHeadSparsity, Algorithm~\ref{alg:preprocessing_parallel} finalizes the suspected period $p_k$, the first position after 0 such that $W[p_k]=0$.
Due to the $2^k$-sparcity, $2^{k} \leq p_k < 2^{k+1}$.
Algorithm~\ref{alg:preprocessing_parallel} finds the suspected period $p_k$ at Line~\ref{alg:line:find_pk}
and then finalizes the position $p_k$ at Line~\ref{alg:line:finalize_pk}.

Let us explain how Algorithm~\ref{alg:satisfy_sparsity} works.
The task of $\SatisfyHeadSparsity(x,k)$ is to make $W[0:x]$ satisfy the $2^{k+1}$-sparsity.
In the case where the suspected period $p_k$ is the smallest period of $P$, i.e., $\mcal{W}_P(p_k)=\emptyset$, we have $\tail = m-\LCP_P(p_k)=p_k<2^{k+1}$ when Algorithm~\ref{alg:preprocessing_parallel} calls \SatisfyHeadSparsity{$\tail-1,k$}.
Then the array $A$ obtained at Line~\ref{alg:line:ssgz} is empty and \SatisfyHeadSparsity{$\tail-1,k$} does nothing.
After finalizing $\Tail_{k+1}$, which will be explained later, the algorithm will halt without going into the next loop, since $|\Head_{k+1}| \leq m-\LCP_P(p_k) = p_k < 2^{k+1}$.
At that moment all positions of $W$ are finalized.

Hereafter we suppose that $p_k$ is not a period of $P$.
When \SatisfyHeadSparsity{$\tail-1,k$} is called, the value of $W[p_k]$ is the tight witness and the first $2^{k+1}$-block contains no zeros except $W[0]$.
At that moment, the other part of the head is $2^{k}$-sparse.
To make it $2^{k+1}$-sparse, we perform duels between two zero positions $i$ and $j$ ($i < j$) within each of the $2^{k+1}$-blocks of the head except for the first one. 
The witness used for the duel between $i$ and $j$ is $W[a]$ for $a=j-i$, which is in the first $2^{k+1}$-block.
The following two lemmas ensure that indeed such duels are possible.
Suppose that the pattern is superimposed on itself with offsets $i$ and $j$.
Lemma~\ref{lem:headwitnesses} below claims that if we already know $w \in \mcal{W}_P(a)$ and $j+w \leq m$, in other words, if the witness lies within the overlap region, then we can obtain a witness for one of the offsets $i$ and $j$ by dueling them using $w$, without looking into other positions.
Lemma~\ref{lem:witness_limit} ensures that indeed we have a witness $w = W[a]$ in our table such that $j+w \leq m$ holds, thanks to the invariant property.
\begin{restatable}{lemma}{headwitness}\label{lem:headwitnesses}
For two offsets $i$ and $j=i+a$ with $a > 0$, suppose $w \in \mcal{W}_P(a)$ and $j+w \leq m$. Then,
\begin{enumerate}
	\item if the offset $j$ survives the duel, i.e., $\widetilde{P}_{j+1}[w] = \widetilde{P}[w]$, then $w + a \in \mathcal{W}_P(i)$;
	\item if the offset $i$ survives the duel, i.e., $\widetilde{P}_{j+1}[w] \neq \widetilde{P}[w]$, then $w \in \mathcal{W}_P(j)$.
\end{enumerate}
\end{restatable} 
\begin{proof}
	If $\widetilde{P}_{j+1}[w] \neq \widetilde{P}_{1}[w]$, then $w \in \mathcal{W}_P(j)$ by definition.
	Suppose $\widetilde{P}_{j+1}[w] = \widetilde{P}_{1}[w]$.
	The fact $w \in \mathcal{W}_P(a)$ means
	\(
	\widetilde{P}_{1}[w] \neq \widetilde{P}_{a+1}[w]
	\) and thus
	$ \widetilde{P}_{j+1}[w] \neq \widetilde{P}_{a+1}[w]$.
	By Property (3) of the $\approx$-encoding (Definition~\ref{def:encoding}), we have
	\(
	\widetilde{P}_{i+1}[w+a] \neq \widetilde{P}_{1}[w+a],
	\)
	which means $w+a \in \mathcal{W}_P(i)$.
\end{proof}

\begin{restatable}{lemma}{inrange}
\label{lem:witness_limit}
    For round $k$, suppose the preprocessing invariant holds true and $\mcal{W}_P(p_k) \neq \emptyset$.
	Then, when \SatisfyHeadSparsity is about to be called at Line~\ref{alg:line:spss} of Algorithm~\ref{alg:preprocessing_parallel},
	for any two positions $i$ and $j$ of $\Head_{k+1}$ such that $0 < j -i < 2^{k+1}$, it holds that $j + W[j-i] \leq m$.
\end{restatable}
\begin{proof}
	Let $a = j -i$ and $w = W[a]$.
    Recall that $a$ belongs to the first $2^{k+1}$-block and $W[a]$ is updated only if $a = p_k$.
    Suppose $a \neq p_k$.
	At the beginning of round $k$, by the invariant property, we have $w \leq |\Tail_{k}| + 2^{k}$.
	Since $j < |\mathit{Head}_{k+1}| = m - |Tail_{k+1}|$,
	$j + w \leq j + |\mathit{Tail}_{k}| + 2^{k} < m - |\mathit{Tail}_{k+1}| + |\mathit{Tail}_{k}| + 2^{k}$.
	Since $|Tail_{k+1}| - |\mathit{Tail}_{k}| \geq 2^{k}$,
	$m - |\mathit{Tail}_{k+1}| + |\mathit{Tail}_{k}| + 2^{k} < m$.
	Thus, $j + w \leq m$.
	
	If $a=p_k$, $w=W[p_k]$ is the tight witness for offset $p_k$, i.e., $w = \LCP_P(p_k) + 1$. 
	Since $|\Tail_{k+1}| \geq \LCP_P(p_k)$,
	$j + w \leq j + |\Tail_{k+1}| + 1$. 
	Since $j < |\Head_{k+1}|$, $j + |\Tail_{k+1}| + 1 \leq |\Head_{k+1}| + |\Tail_{k+1}| \leq m$.
	We have proved that $j + w \leq m$. 
\end{proof}
Algorithm~\ref{alg:satisfy_sparsity} updates the witness table in accordance with Lemma~\ref{lem:headwitnesses}.
In this way, the $2^k$-sparsity of the head and the correctness of (non-zero) witnesses in the head are maintained.
The invariants $W[i] \leq |\Tail_k| + 2^k$ and $|\Head_k| + \LCP_P(p_k) \geq m$ are used in the proof of Lemma~\ref{lem:witness_limit}.
It remains to show the invariant.
\begin{restatable}{lemma}{lastinvariant}\label{lem:smallwitness}
	At the beginning of round $k$, 
	for all $i \in \{0,\dots,2^{k}-1\}$, it holds $W[i] \leq |\Tail_k| + 1$ and
	for all $i \in \{2^{k},\dots,|\Head_k|-1\}$, it holds $W[i] \leq |\Tail_k| + 2^k$.
\end{restatable}
\begin{proof}
	We show the lemma by induction on $k$.
	At the beginning of round $0$, every element of $W$ is zero and $|\mathit{Tail}_0| = 0$, thus, the claim holds.
	We will show that the lemma holds for $k+1$ assuming that it is the case for $k$.

	Suppose $i < 2^{k+1}$ and $i \neq p_k$. Then $W[i]$ is not updated.
	By induction hypothesis, $W[i] \leq |\Tail_k| + 2^k \leq |\Tail_{k+1}|$ holds.
	Suppose $i = p_k$. If $\mcal{W}_P(p_k)=\emptyset$, the algorithm sets $W[p_k]=0$ and thus the claim holds.
	If $\mcal{W}_P(p_k) \neq \emptyset$, the algorithm sets $W[p_k]$ to the tight witness $\LCP_P(p_k)+1$.
	Thus, $W[p_k] = \LCP_P(p_k)+1 \leq |\Tail_{k+1}|+1$.
	
	Suppose $2^{k+1} \le i < |\Head_{k+1}|$.
	If Algorithm~\ref{alg:satisfy_sparsity} does not update $W[i]$, by the induction hypothesis, $W[i] \leq |\Tail_k|+2^k < |\Tail_{k+1}|+2^{k+1}$ holds.
	Suppose Algorithm~\ref{alg:satisfy_sparsity} updates $W[i]$ or $W[j]$ by a duel between $i$ and $j$, where $2^{k+1} \leq i < j < |\Head_{k+1}|$ and $a=j-i < 2^{k+1}$.
	We have shown above that $W[a] \leq |\Tail_{k+1}|+1$.
	If $i$ wins the duel, then $W[j] = W[a] \leq |\Tail_{k+1}| + 1 \leq |\Tail_{k+1}| + 2^{k+1}$.
	If $j$ wins the duel, then $W[i] = W[a] + a \leq |\Tail_{k+1}| + 1 + a \leq |\Tail_{k+1}| + 2^{k+1}$.
\end{proof}

\begin{restatable}{lemma}{headcomplexity}
    \label{lem:satisfy_sparsity}
	In the round $k$ of the while loop, Algorithm~\ref{alg:satisfy_sparsity} updates the witness table so that $\Head_{k+1}$ is $2^{k+1}$-sparse in $O(\xi_m^\mrm{t})$ time and $O(\xi_m^\mrm{w} \cdot m/2^k)$ work on P-CRCW PRAM.
\end{restatable}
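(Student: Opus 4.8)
The plan is to analyze Algorithm~\ref{alg:satisfy_sparsity} step by step, first arguing correctness (that $\Head_{k+1}$ becomes $2^{k+1}$-sparse) and then bounding the parallel time and work. For correctness, I would start from the state of the witness table when \SatisfyHeadSparsity{$\tail-1,k$} is called: by the preprocessing invariant and the fact (handled in the surrounding text) that we are in the case $\mcal{W}_P(p_k)\neq\emptyset$, the head $W[0:\tail-1]$ is $2^k$-sparse, and $W[0]$ together with the tight witness $W[p_k]$ are the only two zeros among the first two $2^k$-blocks, with $W[1:2^{k}-1]$ zero-free. First I would observe that the call \GetZeros{$2^{k+1},x,k$} at Line~\ref{alg:line:ssgz} returns, for each $2^k$-block beyond the first two, exactly one zero position (and $-1$ padding for the last, possibly short, block), so $A$ lists the zeros of the head outside the first $2^{k+1}$-block in increasing order, one per $2^k$-block. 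Pairing $A[2i]$ with $A[2i+1]$ thus pairs zeros in adjacent $2^k$-blocks that lie in a common $2^{k+1}$-block, and $a=j_2-j_1$ satisfies $0<a<2^{k+1}$, so the witness $W[a]$ resides in the first $2^{k+1}$-block, where it has already been finalized (being in the first or second $2^k$-block).

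Next I would invoke Lemma~\ref{lem:witness_limit} to conclude $j_2\le m-W[j_2-j_1]$, which is exactly the hypothesis needed for Lemma~\ref{lem:headwitnesses}; applying the latter shows that the duel's loser $j_1$ or $j_2$ receives a correct witness: $W[j_2]\leftarrow W[a]\in\mcal{W}_P(j_2)$ if $j_1$ survives, or $W[j_1]\leftarrow W[a]+a\in\mcal{W}_P(j_1)$ if $j_2$ survives. Hence after the parallel loop, within every $2^{k+1}$-block of the head beyond the first, each of its two constituent $2^k$-blocks contributes at most one zero and at least one of the two has been overwritten by a nonzero witness, leaving at most one zero per $2^{k+1}$-block; combined with the first $2^{k+1}$-block, whose only zero is $W[0]$ (since $W[p_k]$ is now the tight witness and $W[1:2^k-1]$ was already zero-free), this is precisely $2^{k+1}$-sparsity. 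The last (short) block contributes at most one zero, consistent with the definition. I also need to note that the witnesses newly written are correct and satisfy the magnitude bound $W[i]\le|\Tail_{k+1}|+2^{k+1}$; but that magnitude bound is exactly Lemma~\ref{lem:smallwitness}, so I can simply cite it, and correctness of the new witnesses is the content of Lemma~\ref{lem:headwitnesses} just applied.

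For the complexity bound, I would argue: Line~\ref{alg:line:ssgz} is one call to \GetZeros{$2^{k+1},x,k$}, which runs in $O(1)$ time and $O(x)=O(|\Head_{k+1}|)=O(m)$ work by the stated bound for Algorithm~\ref{alg:get_zeros}; however, since the effective work only touches indices in $[2^{k+1},x]$ and writes into an array of size $O(|A|)=O(m/2^k)$, the dominant cost is $O(m)$ work --- here I should be careful, because the claimed bound is $O(\xi_m^\mrm{w}\cdot m/2^k)$, not $O(m)$. The main obstacle is precisely this discrepancy: I expect \GetZeros as written to cost $\Theta(x)=\Theta(m)$ work because it scans all of $\{l,\dots,r\}$. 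To reconcile, I would reread the specification: the array $A$ created in \GetZeros has size $\floor{r/2^k}-\floor{l/2^k}=O(m/2^k)$, and its initialization costs $O(m/2^k)$ work; the scan over $i\in\{l,\dots,r\}$ costs $O(r-l)=O(m)$ work naively, but using the $2^k$-sparsity invariant one zero-search per $2^k$-block can be done by a single processor per block in $O(2^k)$ time --- which would blow the time bound --- or, more plausibly, the intended reading is that \GetZeros is only ever called with $r-l=O(m)$ in the top-level algorithm but here with a telescoping argument; I would instead reexamine whether the intended complexity of \GetZeros is $O(1)$ time, $O((r-l)/2^k + \text{something})$. The cleanest resolution is: each of the $O(m/2^k)$ parallel threads in the \ForPar of Algorithm~\ref{alg:satisfy_sparsity} performs one \Dueling call, each costing $O(\xi_m^\mrm{t})$ time and $O(\xi_m^\mrm{w})$ work (re-encoding a single position and one comparison), plus $O(1)$ table writes; summing gives $O(\xi_m^\mrm{t})$ time and $O(\xi_m^\mrm{w}\cdot m/2^k)$ work for the loop body, and I would then need to charge \GetZeros's internal scan to this same budget --- which forces me to assert (or prove as a sub-claim) that \GetZeros, when $W[0:r]$ is $2^k$-sparse, can be implemented in $O(\xi_m^\mrm{t})$ time and $O((r-l)/2^k)$ work rather than $O(r)$ work, e.g.\ by assigning one processor per $2^k$-block and using the sparsity guarantee that each block has $O(1)$ zeros which can be found by... no, that still needs $O(2^k)$ time per block. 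I therefore expect the real argument to be that the paper tolerates the \GetZeros scan because in the amortized analysis over all rounds $\sum_k m = O(m\log m)$ work total, but for the per-round statement one must either (i) accept $O(m)$ work and reconcile with a sharper later accounting, or (ii) show the scan of $W[2^{k+1}:x]$ can reuse the result of the previous round's \GetZeros since the head only shrinks. Option (ii) is the key insight I would develop: maintain across rounds a compacted list of zero positions of the head so that at round $k$ only $O(m/2^k)$ fresh positions need examination; then the per-round work is genuinely $O(\xi_m^\mrm{w}\cdot m/2^k)$. I would present the complexity half of the proof around this reuse/compaction argument, treating the correctness half as a direct composition of Lemmas~\ref{lem:headwitnesses}, \ref{lem:witness_limit}, and~\ref{lem:smallwitness} as sketched above, with the time bound following since all work is organized in $O(1)$ parallel rounds each of depth $O(\xi_m^\mrm{t})$.
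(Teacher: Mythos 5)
Your correctness argument is essentially the paper's own: after $p_k$ is finalized, the first $2^{k+1}$-block of the head contains no zero other than $W[0]$, each later $2^{k+1}$-block contains at most two zeros (one per constituent $2^k$-block, which is what the call to Algorithm~\ref{alg:get_zeros} extracts and what the pairing $A[2i],A[2i+1]$ exploits), Lemma~\ref{lem:witness_limit} supplies the hypothesis $j \leq m - W[j-i]$ needed for Lemma~\ref{lem:headwitnesses}, and that lemma guarantees the duel writes a correct witness into one of the two zero positions, so at most one zero per $2^{k+1}$-block survives. That is exactly the decomposition in the paper's proof, and your observation that the magnitude bound is Lemma~\ref{lem:smallwitness} and need not be re-proved here is also consistent with how the paper organizes things.

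Where you diverge is the complexity half, and there you never commit to an argument. The paper's proof simply counts the duels: there are $O(m/2^k)$ of them, each costing $O(\xi_m^\mrm{t})$ time and $O(\xi_m^\mrm{w})$ work, giving the stated bounds; the $O(m)$-work scan inside Algorithm~\ref{alg:get_zeros} is not charged to this lemma at all. Your worry about that scan is a legitimate reading of the pseudocode --- for an SCER with $\xi_m^\mrm{w}=O(1)$ and large $2^k$ the scan alone exceeds $O(\xi_m^\mrm{w}\cdot m/2^k)$ --- and it is harmless only at the level of Theorem~\ref{th:preprocessing_complexity}, since an extra $O(m)$ per round over $O(\log m)$ rounds is dominated by the $O(\xi_m^\mrm{w}\cdot m\log m)$ cost of Algorithm~\ref{alg:extend_suffix} (Lemma~\ref{lem:extend_tail}). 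But instead of either adopting the paper's accounting (count only the duels) or cleanly stating and proving a compaction scheme, you drift through several mutually inconsistent options (amortization, per-block sequential search that you yourself note breaks the time bound, a cross-round compacted zero list) and end by promising to build the proof around the last one, which modifies the algorithm rather than proving the lemma as stated. As written, the complexity portion of your proposal is therefore not a finished proof; you should either explicitly exclude the Algorithm~\ref{alg:get_zeros} cost as the paper does (flagging it as an accounting convention absorbed elsewhere), or fully specify the compacted-zero-list variant and verify it preserves the correctness invariants, rather than leaving the resolution open.
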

\begin{proof}
	Before the execution of Algorithm~\ref{alg:satisfy_sparsity}, since the preprocessing invariant is satisfied and $W[p_k]$ holds the tight witness for offset $p_k$, $W[0 \mathbin{:} 2^k - 1]$ contains one zero. 
	Now, let us consider a $2^{k}$-block $B$ of $\Head_k$ that is not the first $2^k$-block.
	Since $\Head_k$ satisfies the $2^{k-1}$-sparsity, there are at most two zero positions in $B$.
	Suppose that $B$ has two distinct zero positions $i$ and $i+a$.
	Since $a < 2^{k}$, $W[a] \neq 0$.
	By Lemma~\ref{lem:witness_limit}, for offsets $i$ and $i+a$, $i + a \leq m - W[a]$.
	Thus, by Lemma~\ref{lem:headwitnesses}, at least one of $W[i]$ and $W[i+a]$ is updated as the result
	of the duel.
	Thus, after performing duels for all $2^k$-blocks of $\Head_k$, $\Head_k$ satisfies the $2^{k}$-sparsity.
	
	Since each duel takes $O(\xi_m^\mrm{t})$ time and $O(\xi_m^\mrm{w})$ work and there are $O(m/2^k)$ duels in total,
	the overall time and work complexities are $O(\xi_m^\mrm{t})$ and $O(\xi_m^\mrm{w} \cdot m/2^k)$, respectively.
\end{proof}

\subsubsection*{Tail invariant}

Next, we discuss how the algorithm finalizes $\mathit{Tail}_{k+1}$ in the round $k$.
This procedure is described in Algorithm~\ref{alg:extend_suffix}.
For the sake of convenience, we denote by $\mathcal{T}_{k}$ the set of positions of $\Tail_{k}$.
Since $\Tail_{k}$ has already been finalized, it is enough to update $W[i]$ for $i \in \mcal{T}_{k+1} \setminus \mcal{T}_{k}$.
We have two cases depending on how much the tail is extended. 

The first case where $|\Tail_{k+1}| = |\Tail_{k}| + 2^{k}$ is handled naively.
Since $\Head_{k}$ satisfies the $2^{k}$-sparsity by the invariant, there are at most two zero positions in $\mathcal{T}_{k+1} \setminus \mathcal{T}_{k}$.
Algorithm~\ref{alg:extend_suffix} naively uses Algorithm~\ref{alg:check_SCER_parallel} to finalize those positions.

Now, we consider the case $|\Tail_{k+1}| = \LCP_P(p_k) > |\Tail_{k}| + 2^{k}$.
The following lemma holds concerning the periodicity under SCERs.
\begin{lemma}
\label{lem:border_period_transitivity}
    Suppose that $a$ and $b$ are periods of $X$.
    If\/ $a+b < |X|$, then $(b+a)$ is a period of $X$.
	If\/ $a < b$, then $(b-a)$ is a period of $X[1 \mathbin{:} |X| - a]$.
\end{lemma}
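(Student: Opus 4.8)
The plan is to derive both statements directly from the substring‑consistency of the SCER $\approx$ together with its transitivity and symmetry, without invoking any periodicity machinery. Write $n=|X|$ throughout, and recall that $p$ being a (border‑based) period of a string $Z$ of length $\ell$ means $Z[1:\ell-p]\approx Z[p+1:\ell]$. The whole proof is index bookkeeping: I express each given period relation as an $\approx$‑equality between two shifted copies of $X$, restrict both relations to a common index window (legal by the SCER property), and splice the results with transitivity.

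For the first claim, suppose $a$ and $b$ are periods of $X$ with $a+b<n$. First I would restrict the relation $X[1:n-a]\approx X[a+1:n]$, an equality between strings of length $n-a$, to the index range $[1:n-a-b]$ (nonempty since $a,b\ge 1$ and $a+b<n$); by substring‑consistency this yields $X[1:n-a-b]\approx X[a+1:n-b]$. Next I would restrict $X[1:n-b]\approx X[b+1:n]$, an equality between strings of length $n-b$, to the index range $[a+1:n-b]$ (nonempty since $a+1\le n-b$, i.e.\ $a+b<n$): reading off positions $a+1,\dots,n-b$ of $X[b+1:n]$ gives exactly $X[a+b+1:n]$, so we obtain $X[a+1:n-b]\approx X[a+b+1:n]$. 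Chaining the two relations by transitivity gives $X[1:n-a-b]\approx X[a+b+1:n]$, which says precisely that $a+b$ is a period of $X$ (and $1\le a+b<n$).

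For the second claim, assume $a<b$ and set $X'=X[1:n-a]$, $n'=|X'|=n-a$; the goal is $X'[1:n'-(b-a)]\approx X'[(b-a)+1:n']$, i.e.\ $X[1:n-b]\approx X[b-a+1:n-a]$. I would produce the right‑hand substring by restricting $X[1:n-a]\approx X[a+1:n]$ to the index range $[b-a+1:n-a]$ (nonempty since $a<b<n$): positions $b-a+1,\dots,n-a$ of $X[a+1:n]$ spell out $X[b+1:n]$, so this restriction yields $X[b-a+1:n-a]\approx X[b+1:n]$. Combining this with the definition of the period $b$, namely $X[1:n-b]\approx X[b+1:n]$, and using symmetry and transitivity of $\approx$, I get $X[1:n-b]\approx X[b-a+1:n-a]$, which is exactly what was required (and $1\le b-a<n-a=|X'|$).

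I do not anticipate a genuine obstacle; the only delicate points are (i) checking that every index window chosen is nonempty and lies within the length of the two strings being compared — the hypotheses $a+b<n$ and $a<b$ are exactly what secures this — and (ii) correctly identifying which contiguous substring of $X$ a given index window of a shifted copy $X[c+1:n]$ corresponds to. Both parts follow the identical template, so once the first is written carefully the second is a routine variation.
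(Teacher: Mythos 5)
Your proposal is correct and follows essentially the same route as the paper's proof: both parts are obtained by restricting the two period relations $X[1:n-a]\approx X[a+1:n]$ and $X[1:n-b]\approx X[b+1:n]$ to suitable index windows via substring consistency and then chaining with transitivity (your choice of the intermediate string $X[a+1:n-b]$ in the first part is just the symmetric twin of the paper's $X[b+1:n-a]$). The index bookkeeping and nonemptiness checks you flag are exactly the only delicate points, and you handle them correctly.
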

This lemma implies that if $p$ is a period of $X$, then so is $qp$ for every positive integer $q \le \floor{(|X|-1)/p}$.
\begin{proof}
	Let $n = |X|$.
	Since $a$ is a period of $X$, by the definition $X[1 \mathbin{:} n - a] \approx X[1+a \mathbin{:} n]$. Thus, $X[1+b \mathbin{:} n - a] \approx X[a+b+1:n]$.
	Similarly, since $b$ is a period of $X$, by the definition $X[1 \mathbin{:} n - b] \approx X[1+b \mathbin{:} n]$. Thus, $X[1 \mathbin{:} n - b - a] \approx X[1+b:n-a]$.
	Thus, $X[1+b:n-a] \approx X[a+b+1 \mathbin{:} n] \approx X[1 \mathbin{:} n - b - a]$, which means that $(b+a)$ is a period of $X$.
	
	Since $a$ and $b$ are periods of $X$, $X[1+b-a \mathbin{:} n - a] \approx X[1+b \mathbin{:} n]$ and $X[1 \mathbin{:} n - b] \approx X[1+b \mathbin{:} n]$ hold.
    Thus, by the transitivity property, $(b-a)$ is a period of $X[1:n-a]$.
\end{proof}
\begin{figure}[t]
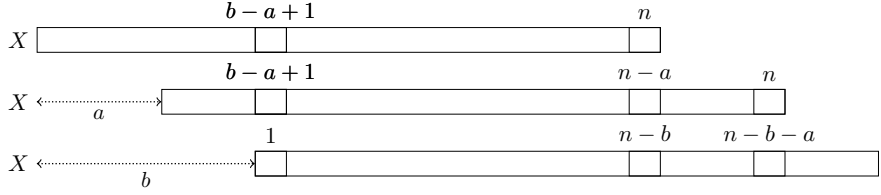

	\centering
	\includestandalone[scale=.82]{figures/border_period_transitivity}
	\caption{Suppose that $a$ and $b$ are periods of $X$.
    If\/ $a+b < |X|$, then $(b+a)$ is a period of $X$.
	If\/ $a < b$, then $(b-a)$ is a period of $X[1 \mathbin{:} |X| - a]$.
	}
	\label{fig:border_period_transitivity}
\end{figure}
We finalize the tail based on the following lemma.
\begin{restatable}{lemma}{tailprop}
\label{lem:tail_prop}
    Suppose $m - LCP_P(p) \leq b < m$.
    If $w \in \mcal{W}_P(b)$, then $(w+b-a) \in \mcal{W}_P(a)$ for any offset $a$ such that $0 \leq a \leq b$ and $a \equiv b \pmod{p}$.
\end{restatable}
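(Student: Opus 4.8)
The plan is to transport the single mismatch witnessing $\widetilde{P}_{b+1}[w]\ne\widetilde{P}[w]$ across the block of $b-a$ positions, using the shift-invariance clause~(3) of Definition~\ref{def:encoding} together with the substring consistency of $\approx$. Write $t=(b-a)/p$, a nonnegative integer by hypothesis; if $t=0$ there is nothing to prove, so assume $t\ge 1$. Fix $w\in\mcal{W}_P(b)$; by definition $1\le w\le m-b$ and $\widetilde{P}_{b+1}[w]\ne\widetilde{P}[w]$, i.e.\ $f(P[b+1:m])[w]\ne f(P)[w]$. Since $w+(b-a)\le(m-b)+(b-a)=m-a$, the index $w+b-a$ is a legitimate witness position for offset $a$, and it remains to show $f(P[a+1:m])[w+tp]\ne f(P)[w+tp]$.

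First I would move the mismatch forward through a shift of $tp$ positions. Apply clause~(3) of Definition~\ref{def:encoding}, in its contrapositive form, to the strings $X=P[a+1:m]$ and $Y=P[1:m-a]$ (both of length $m-a$) with $j=tp$ and $i=w+tp$. Since $X[tp+1:]=P[b+1:m]$, $Y[tp+1:]=P[tp+1:m-a]$ and $i-j=w$, it follows that
\[
  f(P[b+1:m])[w]\ne f(P[tp+1:m-a])[w]\ \Longrightarrow\ f(P[a+1:m])[w+tp]\ne f(P[1:m-a])[w+tp].
\]
By clause~(2), using $w+tp\le m-a$, the right-hand side equals $f(P[a+1:m])[w+tp]\ne f(P)[w+tp]$, which is exactly the goal. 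Since we already know $f(P[b+1:m])[w]\ne f(P)[w]$, it therefore suffices to prove $f(P[tp+1:m-a])[w]=f(P)[w]$. Applying clause~(2) once more (here $w\le m-b=|P[tp+1:m-a]|$), the two sides equal $f(P[tp+1:tp+w])[w]$ and $f(P[1:w])[w]$ respectively, so by clause~(4) it is enough to establish $P[1:w]\approx P[tp+1:tp+w]$, i.e.\ $w\le\LCP_P(tp)$.

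The remaining, and most delicate, step is to deduce $\LCP_P(tp)\ge w$. Starting from $P[1:\LCP_P(p)]\approx P[p+1:p+\LCP_P(p)]$ and repeatedly extracting the substring shifted by $p$ — each extraction valid as long as the extracted window stays inside $[1,\LCP_P(p)]$ — substring consistency of $\approx$ and transitivity yield $P[1:\ell]\approx P[sp+1:sp+\ell]$ for all $s\le t$ and suitable lengths $\ell$; taking $s=t$ and $\ell=w$ gives $P[1:w]\approx P[tp+1:tp+w]$, as required. The congruence $a\equiv b\pmod p$ with $0\le a$ guarantees $tp=b-a\le b$, so every shifted window lies within the pattern ($sp+w\le tp+w\le b+(m-b)=m$). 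The genuine obstacle is to check that these windows also remain inside the region certified by $\LCP_P(p)$: the binding constraint is $(t-1)p+w\le\LCP_P(p)$, and since $w\le m-b$ this reduces to $\LCP_P(p)\ge m-a-p$, which is not implied by $\LCP_P(p)\ge m-b$ alone. Here one must invoke the preprocessing invariants maintained by Algorithm~\ref{alg:preprocessing_parallel} (in particular the lower bound $|\Head_k|+\LCP_P(p_k)\ge m$ and the fact that the lemma is applied only for offsets $a$ that themselves lie in the tail range $[m-\LCP_P(p),m)$), which supply precisely the extra slack needed to push the chain of $p$-shifts through; carrying out this bookkeeping carefully is the crux of the proof.
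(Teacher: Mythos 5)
Your route is essentially the paper's: you transport the mismatch $\widetilde{P}_{b+1}[w]\neq\widetilde{P}[w]$ across the shift $b-a$ using property (3) of Definition~\ref{def:encoding}, after reducing the problem to showing $\widetilde{P}_{b-a+1}[w]=\widetilde{P}[w]$, which is a periodicity statement about the prefix certified by $\LCP_P(p)$ (the paper gets it in one step from Lemma~\ref{lem:border_period_transitivity} applied to $P[1\mathbin{:}\LCP_P(p)]$, you chain $t=(b-a)/p$ single $p$-shifts). The encoding manipulations in your first two steps are correct. The problem is that you stop at the decisive inequality $w+(t-1)p\le\LCP_P(p)$, declare it ``the crux,'' and defer it to unspecified preprocessing invariants; as written this is an admitted gap, so your attempt does not constitute a proof of the lemma.

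That said, the obstacle you flag is genuine and is glossed over in the paper: its proof uses $w\le\LCP_P(p)-(b-a)$, which follows from $w\le m-b$ only when $m-a\le\LCP_P(p)$, i.e.\ $a\ge m-\LCP_P(p)$ --- a hypothesis the lemma does not state (it allows any $0\le a\le b$). Without it the statement can even fail outright: for exact matching take $P=\mathtt{aab}$, $p=1$ (so $\LCP_P(1)=1$), $b=2$, $w=1\in\mcal{W}_P(2)$ and $a=0$; the conclusion would place $3$ in $\mcal{W}_P(0)=\emptyset$. In every application (Algorithms~\ref{alg:extend_suffix} and~\ref{alg:binary_search}) one indeed has $a\ge\tail=m-\LCP_P(p_k)$, and once this hypothesis is added your ``crux'' is a one-line computation, $w+(t-1)p\le w+(b-a)\le(m-b)+(b-a)=m-a\le\LCP_P(p)$, with no need to invoke $|\Head_k|+\LCP_P(p_k)\ge m$ or any other algorithmic invariant. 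So: same approach as the paper and a correctly diagnosed difficulty, but you should have closed the argument by making the missing hypothesis explicit and finishing this estimate rather than leaving it open.
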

\begin{proof}
	Figure~\ref{fig:tail_prop} may help understanding the proof.
	Suppose $w \in \mcal{W}_P(b)$, i.e., $\widetilde{P}_{b+1}[w] \neq \widetilde{P}[w]$.
	Since $p$ is a period of $P[1 \mathbin{:} \LCP_P(p)]$ and $a \equiv b \pmod {p}$, by Lemma~\ref{lem:border_period_transitivity}, $(b - a)$ is also a period of $P[1 \mathbin{:} \LCP_P(p)]$, i.e., $P[1+b-a : LCP_P(p)] \approx P[1:LCP_P(p) - (b-a)]$.
	Particularly for the position $w \leq m-b \leq m-a$, we have $\widetilde{P}_{b-a+1}[w] = \widetilde{P}[w]$.
	Then, $\widetilde{P}_{b-a+1}[w] \neq \widetilde{P}_{b+1}[w]$ by the assumption (Figure~\ref{fig:tail_prop}).
	By Property (3) of the $\approx$-encoding (Definition~\ref{def:encoding}), $\widetilde{P}_{1}[b-a+w] \neq \widetilde{P}_{a+1}[b-a+w]$.
	That is, $(w+b-a) \in \mathcal{W}_P(a)$.
\end{proof}
\begin{figure}[t]
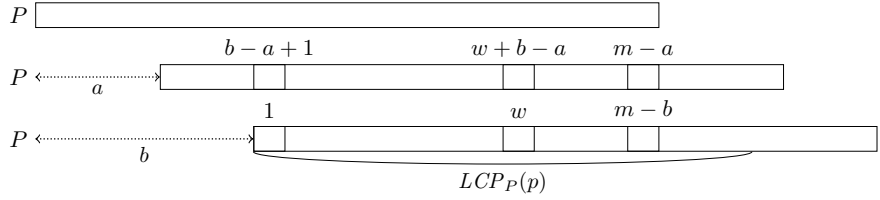

	\centering
	\includestandalone[scale=.82]{figures/tail_prop}
	\caption{For offsets $a, b$ such that $m - LCP_P(p) \leq b < m$ and $a \equiv b \pmod p$, if $w \in \mathcal{W}_P(b)$, then $(w+b-a) \in \mathcal{W}_P(a)$.}
	\label{fig:tail_prop}
\end{figure}
Let us partition $\mcal{T}_{k+1}\setminus \mcal{T}_{k}$ into $p_k$ subsets $\mcal{S}_0,\dots,\mcal{S}_{p_k-1}$ where $\mcal{S}_\rem = \{\, i \in \mathcal{T}_{k+1} \setminus \mcal{T}_{k} \mid i \equiv \rem \pmod{p_k}\,\}$,
some of which can be empty.
Lemma~\ref{lem:tail_prop} implies that for each $\rem \in \{0,\dots,p_k-1\}$, there exists a boundary offset $b_{\rem}$ such that, for every $i \in \mathcal{S}_\rem$, $\mathcal{W}_P(i) = \emptyset$ iff $i > b_{\rem}$.
Fortunately, for many $\rem$, one can find the boundary $b_\rem$ very easily, unless $\mcal{S}_\rem=\emptyset$.
Let $q_\rem = \max \mcal{S}_\rem$ for non-empty $\mcal{S}_\rem$. 
Due to the $2^{k}$-sparsity and the fact $p_k < 2^{k+1}$, it holds $W[q_\rem] \neq 0$ for all but at most three $\rem$.
If $W[q_\rem] \neq 0$, then $q_\rem$ is the boundary.
By Lemma~\ref{lem:tail_prop}, $W[W[q_\rem] + q_\rem-i] \in \mcal{W}_P(i)$ for all $i \in \mcal{S}_\rem$.
Accordingly, Algorithm~\ref{alg:extend_suffix} updates those values $W[i]$ in parallel in Lines~\ref{alg:line:suffix_1}--\ref{alg:line:suffix_2}.

\begin{algorithm2e}[tb]
	\caption{Finalize $\Tail_{k+1}$.}
	\label{alg:extend_suffix}
    \SetVlineSkip{0.5mm}
    \Fn{\FinalizeTail{$\tail, \oldtail, p, k$}}{
		\If{$\oldtail - \tail = 2^k$}{
			$Z \leftarrow \GetZeros{$\mathit{\tail}, \oldtail - 1,k$}$\tcc*{$|Z|\leq 2$}
	        \For{$i = 0$ \KwTo $|Z| - 1$}{
    	        $z \leftarrow Z[i]$\;
        	    \If{$z \neq -1$}{%
            	     $W[z] \leftarrow \CheckParallel{$\widetilde{P}[1 \mathbin{:} m - z], \widetilde{P}_{z+1}[1 \mathbin{:} m - z]$}$
                }
            }
		}\Else{
        \ForPar{\textbf{each} $i \in \{\tail, \dotsc, \oldtail-1\}$}{
            \label{alg:line:suffix_1}
            $q \leftarrow  j$ where $j \in \{\oldtail-p, \dotsc, \oldtail-1\}$ and $j \equiv i \pmod p$\;
            \lIf{$W[i] = 0$ and $W[q] \neq 0$}{$W[i] \Leftarrow W[q] + q-i$}
            \label{alg:line:suffix_2}
        }
        $Z \leftarrow $ \GetZeros{$\mathit{old\_tail} - p, \oldtail - 1,k$}\tcc*{$|Z|\leq 3$}
        \For{$i = 0$ \KwTo $|Z| - 1$}{
            $z \leftarrow Z[i]$\;
            \lIf{$z \neq -1$}{%
               \Finalize{$\tail, \oldtail, p, z \bmod p$}%
                }
    	    }
	    }
    }
\end{algorithm2e}
\begin{algorithm2e}[tb]
	\caption{Finalize $i \in \mathcal{T}_{k+1} \backslash \mathcal{T}_{k}$ s.t. $i \equiv \rem \pmod {p_k}$.}
	\label{alg:binary_search}
    \SetVlineSkip{0.5mm}
    \Fn{\Finalize{$\tail, \oldtail, p, \rem$}}{
        $l \leftarrow \ceil{(\tail-\rem)/p} - 1, \ r \leftarrow \floor{(\oldtail-1-\rem)/p} + 1$\;
        \While{$r - l > 1$}{
            $i \leftarrow \floor{(l + r)/2},\ j \leftarrow i \cdot p + \rem$\;
            \lIf{\CheckParallel{$\widetilde{P}[1 \mathbin{:} m - j],\widetilde{P}_{j+1}[1 \mathbin{:} m-j]$} $ = 0$}{%
                \label{alg:line:binary_search}%
                $r \leftarrow i$
            }
            \lElse{$l \leftarrow i$}
        }
        $b_{\rem} \leftarrow l \cdot p + \rem$\;
        $w \leftarrow $ \CheckParallel{$\widetilde{P}[1 \mathbin{:} m - b_{\rem}], \widetilde{P}_{b_{\rem}+1}[1 \mathbin{:} m - b_{\rem}]$}\;
        \ForPar{\textbf{each} $i \in \{\mathit{tail}, \dotsc, b_{\rem}\}$}{
            \lIf{$W[i] = 0$ and $i \equiv b_{\rem} \pmod p$}{%
                \label{alg:line:binary_search_update}%
                $W[i] \Leftarrow w + b_{\rem}-i$%
            }
        }
    }
\end{algorithm2e}

On the other hand, for $\rem$ such that $W[q_\rem] = 0$, Algorithm~\ref{alg:binary_search} uses binary search to find $b_{\rem}$ and a witness $w \in \mcal{W}_P(b_\rem)$ if it exists.
Then, following Lemma~\ref{lem:tail_prop}, Algorithm~\ref{alg:binary_search} sets in parallel $W[i]$ to $w + (b_{\rem} -i)$ where $w \in \mathcal{W}_P(b_{\rem})$ for $i \in \mathcal{S}_\rem$ such that $i \leq b_{\rem}$ (Line~\ref{alg:line:binary_search_update}).
If there is no boundary $b_\rem$ in $\mcal{S}_\rem$, then $\mathcal{W}_P(i) = \emptyset$ for all $i \in \mathcal{S}_\rem$.
We do nothing in that case.

In Algorithm~\ref{alg:binary_search}, the invariant is as follows.
For $i \in \mathcal{S}_\rem$, $\mathcal{W}_P(i) \neq \emptyset$ if $i \leq l \cdot p_k + \rem$, and $\mathcal{W}_P(i) = \emptyset$ if $i \geq r \cdot p_k + \rem$.
Each condition check of the binary search (Line~\ref{alg:line:binary_search})
takes $O(\xi_m^\mrm{t})$ time and $O(\xi_m^\mrm{w} \cdot m)$ work.
Thus, the overall complexity of Algorithm~\ref{alg:binary_search} is $O(\xi_m^\mrm{t} \log m)$ time and $O(\xi_m^\mrm{w} \cdot m \log m)$ work.

\begin{restatable}{lemma}{tailcomplexity}
\label{lem:extend_tail}
	In round $k$, Algorithm~\ref{alg:extend_suffix} finalizes $\mathit{Tail}_{k+1}$
	in $O(\xi_m^\mrm{t} \log m)$ time and $O(\xi_m^\mrm{w} \cdot m \log m)$ work on P-CRCW PRAM.
\end{restatable}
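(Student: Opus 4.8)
The plan is to prove the two halves of the statement — that Algorithm~\ref{alg:extend_suffix} finalizes $\Tail_{k+1}$, and that it runs within the stated bounds — by following the two branches of the code. For correctness I would invoke Lemma~\ref{lem:check_parallel} in the ``small-extension'' branch and Lemma~\ref{lem:tail_prop} in the other; for the running time I would add up the cost of each primitive — \GetZeros, the parallel \textbf{for}-loops, a \CheckParallel call together with the re-encoding it needs, and a call to \Finalize — and then observe that only a constant number of the expensive ones is ever invoked.

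\emph{Correctness.} Since $\Tail_k$ is already finalized and \SatisfyHeadSparsity does not touch $W[\tail\mathbin{:}\oldtail-1]$, it suffices to show that every position of $\mcal{T}_{k+1}\setminus\mcal{T}_k=\{\tail,\dots,\oldtail-1\}$ is correctly set; a position that is already non-zero was finalized earlier by the head invariant, so only the zero positions matter. In the branch $\oldtail-\tail=2^k$, the $2^k$-sparsity of $\Head_k$ leaves at most two zeros in this window, and each is finalized directly by a \CheckParallel call, whose output is a tight mismatch position (or $0$ when the offset is a period) by Lemma~\ref{lem:check_parallel}. In the branch $\oldtail-\tail=\LCP_P(p_k)>|\Tail_k|+2^k$ I would apply Lemma~\ref{lem:tail_prop} with $p=p_k$: since every member of a residue class $\mcal{S}_\rem$ lies in $[m-\LCP_P(p_k),m)$, the predicate ``$\mcal{W}_P(i)\neq\emptyset$'' is downward closed on $\mcal{S}_\rem$, so a boundary $b_\rem$ exists. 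The classes whose representative $q\in\{\oldtail-p,\dots,\oldtail-1\}$ satisfies $W[q]\neq 0$ are finalized by the parallel loop of Lines~\ref{alg:line:suffix_1}--\ref{alg:line:suffix_2}, again via Lemma~\ref{lem:tail_prop}; the remaining classes are exactly those flagged by the zeros that \GetZeros returns on $\{\oldtail-p,\dots,\oldtail-1\}$, and there are at most three of them, since that window has length $p_k<2^{k+1}$ while the surrounding table still carries at most one zero per $2^k$-block. Each flagged class is then finalized by \Finalize, whose binary search is correct by the stated invariant together with the monotonicity supplied by Lemma~\ref{lem:tail_prop}, and whose closing parallel loop writes precisely the witnesses $w+b_\rem-i$ that the lemma prescribes.

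\emph{Complexity.} \GetZeros runs in $O(1)$ time and $O(m)$ work, and each parallel \textbf{for}-loop of Algorithm~\ref{alg:extend_suffix} runs in $O(1)$ time and $O(m)$ work. A single invocation of \CheckParallel on suffixes of $\widetilde{P}$ costs $O(\xi_m^\mrm{t})$ time and $O(\xi_m^\mrm{w}\cdot m)$ work: re-encoding the $O(m)$ relevant entries with respect to the required suffix takes $O(\xi_m^\mrm{t})$ time and $O(\xi_m^\mrm{w}\cdot m)$ work, which dominates the $O(1)$-time, $O(m)$-work bound of Lemma~\ref{lem:check_parallel}. In the branch $\oldtail-\tail=2^k$ the loop makes at most two such calls sequentially, giving $O(\xi_m^\mrm{t})$ time and $O(\xi_m^\mrm{w}\cdot m)$ work. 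In the other branch, after an $O(1)$-time, $O(m)$-work parallel update and a \GetZeros call, at most three sequential calls to \Finalize are made; each such call performs $O(\log m)$ sequential \CheckParallel calls inside its binary search plus one $O(1)$-time, $O(m)$-work parallel update, hence $O(\xi_m^\mrm{t}\log m)$ time and $O(\xi_m^\mrm{w}\cdot m\log m)$ work, as already noted for Algorithm~\ref{alg:binary_search}. Combining the two branches yields the claimed $O(\xi_m^\mrm{t}\log m)$ time and $O(\xi_m^\mrm{w}\cdot m\log m)$ work.

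I expect the main obstacle to be the combinatorial bookkeeping of the second branch: establishing that at most three residue classes require the costly binary-search treatment — which is exactly where the $2^k$-sparsity of $\Head_k$ must be combined with the bound $p_k<2^{k+1}$ — and verifying that when \GetZeros is called on $\{\oldtail-p,\dots,\oldtail-1\}$ the pertinent stretch of $W$ still carries at most one zero per $2^k$-block, even though it straddles the freshly $2^{k+1}$-sparse $\Head_{k+1}$ and the partially updated new tail. Once those facts are secured, the time and work accounting is routine.
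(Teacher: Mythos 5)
Your proposal is correct and follows essentially the same route as the paper's proof: split on the two branches of Algorithm~\ref{alg:extend_suffix}, handle the $2^k$-extension case with at most two direct \CheckParallel calls, finalize the residue classes with non-zero representatives in parallel via Lemma~\ref{lem:tail_prop}, and bound by three (using $2^k$-sparsity of $\Head_k$ and $2^k \leq p_k < 2^{k+1}$) the classes needing the $O(\xi_m^\mrm{t}\log m)$-time binary search of Algorithm~\ref{alg:binary_search}. Your accounting is if anything slightly more careful than the paper's (you track the $\xi_m$ re-encoding cost in the cheap steps, which the paper absorbs), and the sparsity bookkeeping you flag as a possible obstacle goes through because the updates in round $k$ only turn zeros into non-zeros.
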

\begin{proof}
	First, if $|\Tail_{k+1}| = |\Tail_{k}| + 2^{k}$, Algorithm~\ref{alg:extend_suffix} finalizes all positions $i \in \mathcal{T}_{k+1} \setminus \mathcal{T}_{k}$ in $O(1)$ time and $O(m)$ work.
	Next, let us consider the case when $|\Tail_{k+1}| = \LCP_P(p_k)$.
	Algorithm~\ref{alg:extend_suffix} finalizes all positions $i \in \mathcal{S}_\rem$ such that $W[q_\rem] \neq 0$ in $O(1)$ time and $O(m)$ work.
	Considering $i \in \mathcal{S}_\rem$ such that $W[q_\rem] = 0$,
	since $\Head_{k}$ is $2^{k}$-sparse and $2^{k} \leq p_k < 2^{k+1}$,
	there are at most three zero positions in the suffix of length $p_k$ of $\Head_{k}$.
	Therefore, there are at most three $\rem$ where $W[q_\rem] = 0$.
	Algorithm~\ref{alg:extend_suffix} updates all positions of $\mathcal{S}_\rem$ in parallel in
	$O(\xi_m^\mrm{t} \log m)$ time and $O(\xi_m^\mrm{w} \cdot m \log m)$ work.
	Thus, overall Algorithm~\ref{alg:extend_suffix} runs in $O(\xi_m^\mrm{t} \log m)$ time and $O(\xi_m^\mrm{w} \cdot m \log m)$ work.
\end{proof}

\subsection{Pattern searching}
Now we suppose that a witness table of the pattern has been computed.
Our pattern searching algorithm prunes candidates in two stages: dueling and sweeping stages.
During the dueling stage, candidate positions duel with each other, until the surviving candidate positions are pairwise consistent.
During the sweeping stage, the surviving candidates from the dueling stage are further pruned so that only pattern occurrences survive.
To keep track of the surviving candidates, we introduce a Boolean array $C[1 \mathbin{:} m]$ and initialize every entry of $C$ to $\True$.
If a candidate $T_i$ gets eliminated, we set $C[i] = \False$.
The pattern searching algorithm updates $C$ in such a way that $C[i] = \True$ iff $i$ is a pattern occurrence. 
Entries of $C$ are updated at most once during the dueling and sweeping stages.

\paragraph*{Comparison with Vishkin's algorithm}
When considering exact matching, Vishkin~\cite{vishkin1985optimal} found that if the pattern is periodic, i.e., $P=Q^{j}Q'$ for some aperiodic string $Q$, a proper prefix $Q'$ of $Q$, and $j \ge 2$, the problem can be reduced to finding occurrences of $Q$ and $Q'$ in the text.
Then a position $i$ is an occurrence of $P$ if and only if $i$ is a starting position of $j$ consecutive occurrences of $Q$ followed by an occurrence of $Q'$.
His dueling stage keeps the table $C$ to be $2^k$-sparse in the sense that $C[i]=\True$ for at most one position $i$ in every $2^k$-block, incrementing $k$ up to $\floor{\log |Q|/2}$.
This can be done without violating the invariant, since the occurrences of an aperiodic string $Q$ are guaranteed to be sparse in the sense that the distance of two consecutive occurrences is bigger than $|Q|/2$.
Then the sweeping stage naively checks whether those sparse surviving positions $i$ with $C[i]=\True$ are real occurrences.
Apparently, this idea does not work in SCER matching.
If $P$ has a period $p$ under an SCER, it does not mean that $P$ is a repetition of $Q=P[1:p]$ or that consecutive occurrences of $Q$ form an occurrence of $P$.
Our dueling and sweeping algorithms presented here are quite different from Vishkin's.

\subsubsection*{Dueling stage}

The dueling stage is described in Algorithm~\ref{alg:dueling_stage_general}.
A set of positions is said to be \emph{consistent} if all elements in the set are pairwise consistent.
During the round $k$, the algorithm partitions the candidate positions into blocks of size $2^k$.
Let $\mathcal{C}_{k,j} \subseteq \{(j-1)2^k+1,\dots,j \cdot 2^k\}$ be the set of candidate positions in the $j$-th $2^k$-block which have survived after the round $k$.
The invariant of Algorithm~\ref{alg:dueling_stage_general} is as follows.
\begin{itemize}
\item At any point of execution of Algorithm~\ref{alg:dueling_stage_general}, all pattern occurrences survive.
\item For round $k$, each $\mathcal{C}_{k,j}$ is consistent.
\end{itemize}
Set $\mathcal{C}_{k,j}$ is obtained by ``merging'' $\mathcal{C}_{k-1,2j-1}$ and $\mathcal{C}_{k-1,2j}$.
That is, $\mathcal{C}_{k,j}$ shall be a consistent subset of $\mathcal{C}_{k-1,2j-1} \cup \mathcal{C}_{k-1,2j}$ which contains all the occurrence positions in $\mathcal{C}_{k-1,2j-1} \cup \mathcal{C}_{k-1,2j}$.
After the dueling stage, $\mcal{C}_{\ceil{\log m},1}$ is a consistent set including all the occurrence positions.
We then let $C[i]=\True$ iff $i \in \mcal{C}_{\ceil{\log m},1}$.
In our algorithm, each set $\mathcal{C}_{k,j}$ is represented as an integer array, where elements are sorted in increasing order.

\begin{algorithm2e}[t]
	\Fn(){\DuelingStageParallel{}}{
	\ForPar{\textbf{each} $j \in \{1, \dotsc, m\}$}{
		$\mathcal{C}_{0, j}[1] \Leftarrow j$\;
	}
	$k \leftarrow 1$\;
	\While{$k \leq \ceil{\log m}$}{
		\ForPar{\textbf{each} $j \in \{1, \dotsc, \ceil{m/2^k}\}$}{
			$\mathcal{A} \leftarrow \mathcal{C}_{k-1, 2j - 1},\ \mathcal{B} \leftarrow \mathcal{C}_{k-1, 2j}$\;
			$\langle a, b \rangle \leftarrow $ \Merge{$\mathcal{A}, \mathcal{B}$}\;
			Let $\mathcal{C}_{k,j}$ be array of length $(a + |\mathcal{B}| - b + 1)$\;
			\ForPar{\textbf{each} $i \in \{1, \dotsc, a\}$}{
				$\mathcal{C}_{k,j}[i] \Leftarrow \mathcal{A}[i]$\;
			}
			\ForPar{\textbf{each} $i \in \{b, \dotsc, |\mathcal{B}|\}$}{
				$\mathcal{C}_{k,j}[a + i - b + 1] \Leftarrow \mathcal{B}[i]$\;
			}
			
		}
		$k \leftarrow k+1$\;
	}
	Initialize all elements of $C$ to $\False$\;
	\ForPar{\textbf{each} $i \in \{1, \dotsc, |\mathcal{C}_{\ceil{\log m},1}|\}$}{
		$C[\mathcal{C}_{\ceil{\log m},1}[i]] \Leftarrow \True$\;
	}
	}
	\caption{Parallel algorithm for the dueling stage.}
	\label{alg:dueling_stage_general}
\end{algorithm2e}

Let us consider merging two respectively consistent sets $\mathcal{A}(=\mathcal{C}_{k-1,2j-1})$ and $\mathcal{B}(=\mathcal{C}_{k-1,2j})$ where $\mathcal{A}$ \emph{precedes} $\mathcal{B}$, i.e., $\max \mathcal{A} < \min \mathcal{B}$.
Sets $\mathcal{A}$ and $\mathcal{B}$ should be merged in such a way that the resulting set is consistent and contains all occurrences in $\mathcal{A}$ and $\mathcal{B}$.
That is, we must find a consistent set $\mcal{C}$ such that $\widehat{\mcal{A}} \cup \widehat{\mcal{B}} \subseteq \mcal{C} \subseteq \mcal{A} \cup \mcal{B}$ where $\widehat{\mcal{A}} = \{\, a \in \mcal{A} \mid T_a \approx P\,\}$ and $\widehat{\mcal{B}} = \{\, b \in \mcal{B} \mid T_b \approx P\,\}$ are the sets of occurrences in $\mcal{A}$ and $\mcal{B}$, respectively.
\begin{lemma}\label{lem:duel_consistent}
	Suppose that we are given two respectively consistent position sets $\mathcal{A}$ and $\mathcal{B}$ such that $\mathcal{A}$ precedes $\mathcal{B}$.
	If $a \in \mathcal{A}$ and $b \in \mathcal{B}$ are consistent, then $\mcal{A}_{\leq a} \cup \mcal{B}_{\geq b}$ is also consistent, where $\mcal{A}_{\leq a} = \{i \in \mathcal{A} \mid i \leq a\}$ and $\mcal{B}_{\geq b} = \{j \in \mathcal{B} \mid j \geq b\}$.
\end{lemma}
\begin{proof}
	It is enough to show that if $i$ and $j$ are consistent and $j$ and $k$ are consistent, then $i$ and $k$ are consistent for $i < j < k$.
	This claim can be rephrased so that if $j-i$ and $k-j$ are periods, then so is $k-i$.
	This is an immediate corollary to Lemma~\ref{lem:border_period_transitivity}.
\end{proof}
Therefore, it suffices to find $(a,b) \in \mcal{A} \times \mcal{B}$ such that $a \ge \max \widehat{\mcal{A}}$, $b \le \min\widehat{\mcal{B}}$, and $a$ and $b$ are consistent.
Then, $\mcal{A}_{\leq a} \cup \mcal{B}_{\geq b}$ has the desired property.

To find such a pair $(a,b)$, let us consider a grid $G$ of size $(|\mcal{A}|+2) \times (|\mcal{B}|+2)$.
Figure~\ref{fig:dueling_stage} illustrates the grid, where indices of $\mathcal{A}$ and $\mathcal{B}$ are presented along the directions of rows and columns, respectively.
For $1 \le i \le |\mcal{A}|$ and $1 \le j \le |\mcal{B}|$, $G[i][j]$ represents the result of the duel between $\mcal{A}[i]$ and $\mcal{B}[j]$ using the witness table $W$, which are the $i$-th and $j$-th smallest elements of $\mcal{A}$ and $\mcal{B}$, respectively.
We define $G[i][j]=0$ if $W[d]=0$ for $d=\mcal{B}[j]-\mcal{A}[i]$.
If $W[d] \neq 0$ and $\mcal{A}[i]$ wins the duel, then $G[i][j] = -1$.
Otherwise, $\mcal{B}[j]$ wins the duel and $G[i][j] = 1$.
For the sake of explanatory convenience, we pad grid $G$ with $-1$s along the leftmost column, with $1$s along the bottom row, and with $0$s along the upper row and rightmost column.
Specifically,
$G[i][0] = -1$ for $i \in \{0, \dotsc, |\mathcal{A}|\}$,
$G[|\mathcal{A}| + 1][j] = 1$ for $j \in \{0, \dotsc, |\mathcal{B}|\}$, 
$G[i][|\mathcal{B}| + 1] = 0$ for $i \in \{1, \dotsc, |\mathcal{A}| + 1\}$, and
$G[0][j] = 0$ for $j \in \{1, \dotsc, |\mathcal{B}| + 1\}$.
We will not compute the whole $G$, but this concept helps understanding the behavior of our algorithm.
\begin{figure}[t]
 	\centering
	\newcommand{\rd}[2]{\draw (#1+0.5,#2+0.5) node {\small\textcolor{red}{-1}};}
	\newcommand{\bl}[2]{\draw (#1+0.5,#2+0.5) node {\small\textcolor{blue}{1}};}
	\newcommand{\zr}[2]{\draw (#1+0.5,#2+0.5) node {\small\textcolor{black}{0}};}
	\begin{tikzpicture}[xscale=0.4,yscale=-0.4]
		\fill[red!10] (0.5,0.5) -- (1,0.5) -- (1,1) -- (3,1) -- (3,2) -- (4,2) -- (4,3) -- (0.5,3) -- cycle;
		\fill[blue!10] (9,7) -- (11,7) -- (11,9) -- (12,9) -- (12,11) -- (12.5,11) -- (12.5,11.5) -- (9,11.5) -- cycle;
		\draw[very thin] (1,1) rectangle (13,11);
		\draw[dotted] (0,0) rectangle (14,12);
		\draw[thick, densely dotted] (3,1) -- (3,2) -- (4,2) -- (4,4) -- (7,4) -- (7,6) -- (9,6) -- (9,7) -- (11,7) -- (11,9) -- (12,9) -- (12,11);
		\draw[densely dashed,brown] (9,0.5) -- (9,11.5);
		\draw[densely dashed,brown] (0.5,3) -- (13.5,3);
		\draw (9.5,2.5) node {\textcolor{brown}{$\bullet$}};
		\draw (-0.5,2.5) node {\small\textcolor{brown}{$\hati$}};
		\draw (9.5,12.5) node {\small\textcolor{brown}{$\hatj$}};
		\draw[thick,teal] (4.5,3.5) circle (0.4);
		\draw[thick,|->,red] (-1,1) -- (-1,11);
		\draw (-1.5,6) node {\textcolor{red}{$\mcal{A}$}};
		\draw[thick,|->,blue] (1,13.1) -- (13,13.1);
		\draw (7,13.5) node {\textcolor{blue}{$\mcal{B}$}};
		\draw[thick,|->,teal] (-0.8,1) -- (-0.8,4);
		\draw[thick,|->,teal] (4,12.9) -- (13,12.9);
		\rd{1}{1}
		\rd{2}{1}
		\rd{2}{2}
		\rd{3}{2}
		\rd{5}{4}
		\rd{6}{5}
		\rd{6}{6}
		\rd{7}{6}
		\rd{3}{3}
		\bl{3}{4}
		\bl{4}{4}
		\bl{6}{4}
		\bl{8}{6}
		\bl{8}{7}
		\bl{9}{7}
		\bl{10}{7}
		\bl{10}{8}
		\bl{10}{9}
		\bl{11}{9}
		\bl{11}{10}
		\zr{3}{1}
		\zr{4}{1}
		\zr{4}{2}
		\zr{4}{3}
		\zr{5}{3}
		\zr{6}{3}
		\zr{7}{3}
		\zr{7}{4}
		\zr{7}{5}
		\zr{8}{5}
		\zr{9}{5}
		\zr{9}{6}
		\zr{10}{6}
		\zr{11}{6}
		\zr{11}{7}
		\zr{11}{8}
		\zr{12}{8}
		\zr{12}{9}
		\zr{12}{10}
		\foreach \y in {0,...,10}{\rd{0}{\y}\zr{13}{\y+1}}
		\foreach \x in {0,...,12}{\bl{\x}{11}\zr{\x+1}{0}}
	\end{tikzpicture}
	\caption{Padded grid $G$ given two consistent sets $\mathcal{A}$ and $\mathcal{B}$.
	The grid is separated into the zero region and the non-zero region by the dotted boundary line (Lemma~\ref{lem:duel_consistent}).
	The coordinate $(\hati,\hatj)$ is indicated by the brown dot.
	The red- and blue-shaded areas consist of $-1$ and $1$ only, respectively (Lemma~\ref{lem:region_of_occurrences}).
	Our algorithm outputs $(i,j)$ such that $G[i][j]=0$, $G[i][j-1]=-1$, $G[i+1][j']=1$, and $G[i+1][j'+1]=0$ for some $j'$.
	If there are more than one such coordinate, the smallest $i$ will be chosen by the priority.
	The output coordinate is indicated by the green circle above.
	Then the obtained set consists of the elements represented by the two green lines.    
	}
	\label{fig:dueling_stage}
\end{figure}

In terms of the grid representation, our goal is to find a coordinate $(i,j)$ such that $G[i][j]=0$ and it is to the lower left of $(\hati,\hatj)$ (brown dot in Figure~\ref{fig:dueling_stage}) where $\hati = \max(\{\, i' \mid \mcal{A}[i'] \in \widehat{\mcal{A}} \,\}\cup\{0\})$ and $\hatj = \min(\{\, j' \mid \mcal{B}[j'] \in \widehat{\mcal{B}} \,\} \cup \{|\mcal{B}|+1\})$.
Then, $\mcal{A}_{\le \mcal{A}[i]} \cup \mcal{B}_{\ge \mcal{B}[j]}$ has the desired property, where $\mcal{A}_{\le \mcal{A}[0]}$ and $\mcal{B}_{\ge \mcal{B}[|\mcal{B}|+1]}$ are assumed to be empty.

Lemma~\ref{lem:duel_consistent} implies that if $G[i][j]=0$ then $G[i'][j']=0$ for any $i' \le i$ and $j' \ge j$.
Therefore, grid $G$ can be divided into two regions: the upper-right region that consists of only $0$ and the rest that consists of a mixture of $-1$ and $1$.
The boundary line looks like a step function.
The distributions of $1$ and $-1$ in the non-zero region are not totally random.
Since occurrences will never lose the duel, if $\mcal{A}[i] \in \widehat{\mcal{A}}$, then row $i$ consists of non-positive elements only,
and if $\mcal{B}[j] \in \widehat{\mcal{B}}$, then column $j$ consists of non-negative elements only.
Particularly, $G[\hati][\hatj]=0$.
The following lemma strengthens this observation.
\begin{restatable}{lemma}{duelingregion}
\label{lem:region_of_occurrences}
    If $\mcal{A}[i] \leq \max\widehat{\mcal{A}}$, then row $i$ consists only of non-positive elements.
	Similarly, if $\mcal{B}[j] \leq \min\widehat{\mcal{B}}$, then column $j$ consists only of non-negative elements.
\end{restatable}
\begin{proof}
	We prove the first half of the lemma.
	The second claim can be proven in the same way.
	We show that if $i \leq \hati$ and $G[i][j] \neq 0$, then $G[i][j] = -1$ for any $1 \le j \le |\mcal{B}|$.
	Let $a=\mcal{A}[i]$, $\hat{a}=\max\widehat{\mcal{A}}$, $b=\mcal{B}[j]$, and $\hat{b}=\min\widehat{\mcal{B}}$,
	and suppose the inconsistency of $a$ and $b$ is witnessed by $W[b-a]=w \neq 0$, i.e., $\widetilde{P}[w] \neq \widetilde{P}_{b-a+1}[w]$.
	Since $T_{\hat{a}} \approx P$, $T[b:m+\hat{a}-1] \approx P[b-\hat{a}+1:m]$, which implies $\widetilde{T}_b[w] = \widetilde{P}_{b-\hat{a}+1}[w]$.
	On the other hand, since $a$ and $\hat{a}$ are consistent, i.e., $P[1:m-(\hat{a}-a)] \approx P[\hat{a}-a+1:m]$,
	we have $P[b-\hat{a}+1:m-(\hat{a}-a)] \approx P[b-a+1:m]$, which implies $\widetilde{P}_{b-\hat{a}+1}[w] = \widetilde{P}_{b-a+1}[w]$.
	Therefore, $\widetilde{T}_b[w] = \widetilde{P}_{b-\hat{a}+1}[w] = \widetilde{P}_{b-a+1}[w] \neq \widetilde{P}[w]$.
	Hence, $a$ wins the duel against $b$ and thus $G[i][j] = -1$.
\end{proof}

\begin{algorithm2e}[t]
	\Fn(){\Merge{$\mathcal{A}, \mathcal{B}$}}{
		\ForPar{\textbf{each} $i \in \{0, \dotsc, |\mcal{A}|+1\}$}{
			$j' \leftarrow 0$, $j'' = |\mcal{B}|+1$\;
			\While{$j''-j' > 1$}{
				$j \leftarrow \floor{(j'+j'')/2}$\;
				\If{$G[i][j] = 0$}{$j'' \leftarrow j$\;}
				\Else{$j' \leftarrow j$\;}
			}
			${D}[i] \Leftarrow j'$\;
		}
		\ForPar{\textbf{each} $i \in \{0, \dotsc, |\mcal{A}|\}$}{
			\If{$G[i][{D}[i]]=-1$ and $G[i+1][{D}[i+1]]=1$}{%
				$i_* \Leftarrow i$, \ $j_* \Leftarrow {D}[i]+1$\;
			}
		}
		\Return{$(i_*,j_*)$}\;
	}
	\caption{Merge two consistent sets $\mathcal{A}$ and $\mathcal{B}$}
	\label{alg:dueling_merge}
\end{algorithm2e}


Algorithm~\ref{alg:dueling_merge} firstly finds the unique column $j_i$ for each row $i$ such that $G[i][j_i] \ne 0$ and $G[i][j_i+1]=0$.
Among those boundary coordinates, the algorithm finds a neighbour pair $(i,j_i)$ and $(i+1,j_{i+1})$ such that $G[i][j_i]=-1$ and $G[i+1][j_{i+1}]=1$.
Then, it outputs $(i,j_i+1)$.

\begin{lemma}\label{lem:merge}
	Algorithm~\ref{alg:dueling_merge} finds a coordinate $(i_*,j_*)$ such that $i_* \ge \hati$, $j_* \le \hatj$, and $G[i_*][j_*]=0$ in ${O}(\xi_m^\mrm{t} \log |\mcal{B}|)$ time with ${O}(\xi_m^\mrm{w} |\mcal{A}| \log |\mcal{B}|)$ work.
\end{lemma}
\begin{proof}
	In the \textbf{while}-loop, always $G[i][j'] \ne 0$, $G[i][j''] = 0$, and $j_0 < i_1$ hold.
	When exiting the first \textbf{while}-loop, $D[i]=j'$ such that $G[i][j'] \neq 0$ and $G[i][j'+1]=0$ hold for all $i$.
	Then, the algorithm finds $i$ such that $G[i][j]=-1$, $G[i][j+1]=0$, $G[i+1][j']=1$, where $j=D[i]$ and $j'=D[i+1]$.
	Since $G[i][j]=-1$, by Lemma~\ref{lem:region_of_occurrences}, $j_i < \hatj$.
	Similarly, $G[i+1][j']=1$ implies $i+1 > \hati$.
	Thus, $i_* = i \ge \hati$ and $j_* = j+1 \le \hatj$ satisfy the desired property by Lemma~\ref{lem:duel_consistent}.
%
	
	Since a duel takes $O(\xi_m^\mrm{t})$ time and $O(\xi_m^\mrm{w})$ work, we obtain the claimed complexity.
\end{proof}

\begin{restatable}{lemma}{duelingcomplexity}\label{lem:search_dueling}
    Given a witness table, $\widetilde{P}$, and $\widetilde{T}$, Algorithm~\ref{alg:dueling_stage_general} performs the dueling stage in $O(\xi_m^\mrm{t} \log^2 m)$ time and $O(\xi_m^\mrm{w} m \log^2 m)$ work on P-CRCW-PRAM.
\end{restatable}
\begin{proof}
	Since the \textbf{while}-loop runs $O(\log m)$ times and each loop takes $O(\xi_m \log m)$ time by Lemma~\ref{lem:merge},
	the overall time complexity is $O(\xi_m \log^2 m)$.
	Now, let us look at the work complexity.
	Concerning each round $k$ of the \textbf{while}-loop of Algorithm~\ref{alg:dueling_stage_general},
	\Merge{$\mcal{A},\mcal{B}$} takes ${O}(\xi_m^\mrm{w} 2^k \log m)$ work by Lemma~\ref{lem:merge} and thus
	it takes ${O}(({m}/{2^k}) \cdot \xi_m^\mrm{w} 2^k \log m) = {O}(\xi_m^\mrm{w} m \log m)$ work.
	Since $k \in \{0, \dotsc, \ceil{\log m}\}$, the overall work complexity is ${O}(\xi_m^\mrm{w} m \log^2 m)$.
\end{proof}
\subsubsection*{Sweeping stage}
\begin{algorithm2e}[!t]
	\Fn{\SweepingStageParallel{}}{
		\textbf{create} $R[1 \mathbin{:} m]$ and initialize elements of $R$ to $0$\;
		$k \leftarrow \ceil{\log m}$\;
		\While{$k \geq 0$}{
			\textbf{create} $\mathit{Piv}[0 \mathbin{:} \floor{m/2^k}]$ and initialize its elements to $-1$\;
			\ForPar{\textbf{each} $i \in \{1, \dotsc, m\}$}{
				\lIf{$C[i] = \True$ and $(i \bmod 2^k) > 2^{k-1}$}{%
					$\mathit{Piv}[\floor{i/2^k}] \Leftarrow i$%
				}
			}
			\ForPar{\textbf{each} $b \in \{0, \dotsc, \floor{m/2^k}\}$}{
				$x \leftarrow \mathit{Piv}[b]$\;
				\If{$x \neq -1$}{%
				    $w \leftarrow $ \CheckParallel{$\widetilde{P}[R[x] + 1:m], \widetilde{T}_x[R[x] + 1:m]$\label{alg:line:sweep_0}}\;
				    \lIf{$w = 0$}{$R[x] \Leftarrow m$}
				    \lElse{$R[x] \Leftarrow R[x] + w - 1$}
					\label{alg:line:sweep_1}
				} 
			}
			\ForPar{\textbf{each} $i \in \{1, \dotsc, m\}$}{
				$x \leftarrow \mathit{Piv}[\floor{i/2^k}]$\;
				\label{alg:line:sweep_2}
				\lIf{$i \leq x$ and $R[x] \leq m-(x-i)-1$}{%
						$C[i] \Leftarrow \False$%
				}
				\label{alg:line:sweep_3}
				\lIf{$i > x$ and $C[i] = \True$}{$R[i] \Leftarrow R[x] - (i - x)$%
				\label{alg:line:sweep_4}}
			}
			$k \leftarrow k - 1$\;
		}
	}
	\caption{Parallel algorithm for the sweeping stage}
	\label{alg:sweeping_stage_general}
\end{algorithm2e}
The sweeping stage is described in Algorithm~\ref{alg:sweeping_stage_general}.
The sweeping stage updates $C$ until $C[i] = \True$ iff $i$ is a pattern occurrence.
All entries in $C$ are updated at most once.
Recall that all candidates that survived from the dueling stage are pairwise consistent.
In addition to $C$, we will create a new integer array $R[1:m]$.
Throughout the sweeping stage, we have the following invariant properties:
\begin{itemize}
    \item if $C[x]=\False$, then $T_x \not\approx P$, 
    \item if $C[x]=\True$, then $\LCP(T_x,P) \geq R[x]$.
\end{itemize}
The purpose of bookkeeping this information in $R$ is to ensure that the sweeping stage algorithm uses $O(n)$ processors in each round.
We do not want to access the same position of the text for each candidate covering the position.
For two consistent candidate positions $x$ and $x+a$ with $a > 0$, once we have calculated the value $r=\LCP(T_x,P)$, we know that $\LCP(T_{x+a},P) \ge r-a$ for free, i.e., $\widetilde{T}_{x+a}[1:r-a]=\widetilde{P}[1:r-a]$. Then it suffices to check $\widetilde{T}_{x+a}[r-a+1:m]=\widetilde{P}[r-a+1:m]$. We keep the value $r-a$ in $R[x+a]$ for this trick, if $r-a \ge 0$.
Throughout this section, we assume that a processor is attached to each position of $C$ and $T$.

For each stage $k$, $C$ is divided into $2^k$-blocks.
Unlike the preprocessing algorithm, $k$ starts from $\ceil{\log m}$ and decreases with each round until $k = 0$.
Let us look at each round in more detail.
For the $b$-th $2^k$-block of $C$, we pick as the ``pivot'' the smallest index $x_{k,b}$ in the second half of the $2^k$-block such that $C[x_{k,b}] = \True$.
In Algorithm~\ref{alg:sweeping_stage_general}, we introduce array $\mathit{Piv}[0 \mathbin{:} \floor{m/2^k}]$
where $\mathit{Piv}[b] = x_{k,b}$.
For each $x_{k,b}$, the algorithm computes $\LCP(T_{x_{k,b}},P)$ exactly and store the value in $R[x_{k,b}]$ on Lines~\ref{alg:line:sweep_0}--\ref{alg:line:sweep_1}.
Suppose that $\LCP(T_{x_{k,b}},P) < m$, i.e., $T_{x_{k,b}} \not\approx P$ and $w=\LCP(T_{x_{k,b}},P)+1$ is the tight mismatch position.
Since all surviving candidate positions are pairwise consistent, if $T_{x_{k,b}} \not\approx P$, then, any candidate $T_{x_{k,b}-a}$ that ``covers'' $w$ cannot match the pattern.
Generally, we have the following.
\begin{restatable}{lemma}{sweepingconsistent}
\label{lem:consistent_candidate}
If two candidate positions $x$ and $(x-a)$ with $a > 0$ are consistent and $LCP(T_x,P) \leq m-a-1$,
then $(x-a)$ is not an occurrence.
\end{restatable}
\begin{proof}
	Let $w = LCP(T_x,P) + 1$.
	Then $w \le m-a$ and $T_x[1:m-a] \not\approx P[1:m-a]$.
	Since $x$ is consistent with $(x-a)$,
	\(
		P[a+1:m] \approx P[1:m-a] \not\approx T_x[1:m-a] \approx T_{x-a}[a+1:m]
	\),
	which means that $(x-a)$ is not a pattern occurrence.
\end{proof}
Based on Lemma~\ref{lem:consistent_candidate}, Algorithm~\ref{alg:sweeping_stage_general} updates $C[i]$ for indices $i$ in the first half of each $2^k$-block at Line~\ref{alg:line:sweep_3}.
On the other hand, at Line~\ref{alg:line:sweep_4}, the algorithm updates the values of $R[i]$ for indices $i$ in the second half of the block if $C[i]=\True$.
Since the surviving candidates are pairwise consistent, for candidate positions $(x_{k,b}+a)$ such that $a > 0$, $T_{x_{k,b}+a}[1:r] \approx P[1:r]$ for $r = R[x_{k,b}] - a$.
In this way, the algorithm maintains the invariant properties.
When $k=0$, all the $2^k$-blocks contain just one position $x$ and $R[x]$ is set to be exactly $\LCP(T_x,P)$ by Lines~\ref{alg:line:sweep_0}--\ref{alg:line:sweep_1}, unless $C[x]=\False$ at that time.
Then, if $R[x] < m$, then $C[x]$ will be $\False$ on Line~\ref{alg:line:sweep_3}.
That is, when the algorithm halts, $C[x]=\True$ iff $T_x \approx P$.

It remains to show the efficiency of the algorithm.
\begin{lemma}\label{lem:R_monotone}
	The value of each element of $R$ is never decreased.
\end{lemma}
\begin{proof}
	Suppose that $R[i]$ is updated at round $k$ and then later at round $k'$, where $k > k'$ holds.
	Let $x$ and $x'$ be the pivots of the $2^k$- and $2^{k'}$-blocks where $i$ belongs at round $k$ and $k'$, respectively.
	It must hold that $x \le x' \le i$.
	For $y = \LCP(T_{x},P)$ and $d= x'-x$, we have $T_{x}[1:y] \approx P[1:y]$ and thus $T_{x'}[1:y-d] = T_x[1+d:y] \approx P[1+d:y]$.
	Since $x$ and $x'$ are consistent, $P[1+d:y] \approx P[1:y-d]$.
	Hence, $T_{x'}[1:y-d] \approx P[1:y-d]$, i.e., $\LCP(T_{x'},P) \ge y-d$.
	Therefore,  $\LCP(T_x,P) + x \le \LCP(T_{x'},P) + x'$ and thus $\LCP(T_x,P)-(i-x) \le \LCP(T_{x'},P)-(i-x')$ holds.
	That is, at Line~\ref{alg:line:sweep_4}, the value of $R[i]$ cannot be decreased.
\end{proof}

\begin{restatable}{lemma}{sweepingstage}
\label{lem:sweeping_processor_no_overlap}
    After the round $k$, for two surviving candidate positions $i$ and $j$ with $i < j$ that do not belong to the same $2^{k-1}$-block of $C$, $i + m \leq j + R[j]$.
\end{restatable}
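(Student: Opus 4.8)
The plan is to prove the statement by downward induction on $k$ (from $k=\ceil{\log m}$ down to $k=0$), after first isolating what one round of Algorithm~\ref{alg:sweeping_stage_general} does to $R$ at the block representatives. Fix a round $k$, a $2^k$-block $b$, and its representative $x=x_{b,k}$ (the smallest surviving index in the second half of that block). When $x$ is selected, $C[x]=\True$, so the sweeping-stage invariant gives $\LCP(T_x,P)\ge R[x]$, i.e.\ $\widetilde{P}[1:R[x]]=\widetilde{T}_x[1:R[x]]$; combining this with Properties~(2) and~(3) of the $\approx$-encoding (Definition~\ref{def:encoding}) and Lemma~\ref{lem:check_parallel}, one checks that after Line~\ref{alg:line:sweep_1} the value $R[x]$ equals $\LCP(T_x,P)$ exactly, and that Line~\ref{alg:line:sweep_1} never decreases $R[x]$ (it only replaces $R[x]$ by $m$ or by $R[x]+w-1$ with $w\ge 1$). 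Two consequences will be used repeatedly. \textbf{(A)} For every $j$ with $x_{b,k}\le j$ in the second half of block $b$ that survives round $k$, Line~\ref{alg:line:sweep_4} sets $R[j]=\LCP(T_{x_{b,k}},P)-(j-x_{b,k})$, so $j+R[j]=x_{b,k}+\LCP(T_{x_{b,k}},P)$; i.e.\ every surviving second-half position is ``aligned'' to the representative. \textbf{(B)} For every $i<x_{b,k}$ in the first half of block $b$ that survives round $k$, the fact that Line~\ref{alg:line:sweep_3} did not eliminate $i$ forces $\LCP(T_{x_{b,k}},P)>m-(x_{b,k}-i)-1$, that is, $i+m\le x_{b,k}+\LCP(T_{x_{b,k}},P)$.

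For the base case $k=\ceil{\log m}$ the entire candidate range sits in a single $2^k$-block (as $2^k\ge m$), so two surviving positions $i<j$ in different $2^{k-1}$-blocks have $i$ in its first half and $j$ in its second half; then $i+m\le x_{0,k}+\LCP(T_{x_{0,k}},P)=j+R[j]$ by (B) and (A). For the inductive step, assume the statement for $k+1$ and take surviving $i<j$ after round $k$ lying in different $2^{k-1}$-blocks; note that $i$ and $j$ also survived round $k+1$. If $i$ and $j$ lie in the same $2^k$-block, the base-case argument applies verbatim. Otherwise $i$'s $2^k$-block lies strictly to the left of $j$'s, so $i<x_{b_j,k}$ and $i$ and $x_{b_j,k}$ belong to different $2^k$-blocks. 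If $j$ is in the first half of its $2^k$-block, round $k$ leaves $R[j]$ untouched, so $R[j]$ still equals its value at the end of round $k+1$, and the hypothesis applied to $(i,j)$ gives $i+m\le j+R[j]$. If $j$ is in the second half, then by (A) $j+R[j]=x_{b_j,k}+R[x_{b_j,k}]$ with $R$ taken after round $k$; since $R[x_{b_j,k}]$ did not decrease during round $k$ and $x_{b_j,k}$ survived round $k+1$, the hypothesis applied to $(i,x_{b_j,k})$ yields $i+m\le x_{b_j,k}+R[x_{b_j,k}]=j+R[j]$.

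The one delicate point is this last sub-case: once $j$ is a second-half position, its $R$-value has just been overwritten and no longer records round $k+1$, so the induction hypothesis cannot be invoked at $j$ directly. Fact~(A) is exactly what rescues it — it collapses $j+R[j]$ to $x_{b_j,k}+R[x_{b_j,k}]$ — and the monotonicity of Line~\ref{alg:line:sweep_1} lets the hypothesis be used at the representative $x_{b_j,k}$ instead. A few degenerate configurations also need a line of checking (a $2^k$-block with no surviving second-half position, and the possibly-short rightmost block), but in each of them the position in question is left untouched during round $k$, so the hypothesis for $k+1$ carries over unchanged.
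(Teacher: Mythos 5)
Your proof is correct and follows the same backbone as the paper's: a downward induction over the rounds, with the same-$2^k$-block case handled exactly as in the paper (survival of $i$ under Line~\ref{alg:line:sweep_3} gives $i+m\le x_{b,k}+\LCP(T_{x_{b,k}},P)$, and Line~\ref{alg:line:sweep_4} gives $j+R[j]=x_{b,k}+\LCP(T_{x_{b,k}},P)$). Where you genuinely diverge is the different-$2^k$-block case: the paper disposes of it in one line by asserting $R_k[j]\ge R_{k+1}[j]$ and invoking the induction hypothesis at $(i,j)$, whereas you avoid claiming monotonicity of $R$ at an arbitrary surviving position. That is a real improvement in rigor, because when $j$ lies in the second half of its block its $R$-value is overwritten by Line~\ref{alg:line:sweep_4} with a value anchored to the \emph{new} representative, and the inequality $R_k[j]\ge R_{k+1}[j]$ is not immediate there (it ultimately needs the invariant $\LCP(T_x,P)\ge R[x]$ at the representative and the relation between the old and new anchors); your Fact~(A) collapses $j+R[j]$ to $x_{b_j,k}+R[x_{b_j,k}]$ and uses monotonicity only at the representative, where it follows directly from the invariant, together with the hypothesis applied to $(i,x_{b_j,k})$. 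The paper's route is shorter but glosses over exactly this point; yours is slightly longer but every step is checkable. The only places to tighten are cosmetic: your appeal to Property~(3) of the $\approx$-encoding is not needed for the claim $R[x]=\LCP(T_x,P)$ (Properties~(2) and~(4) suffice), and the boundary conventions for the last block and for $k=\ceil{\log m}$ deserve the explicit sentence you promise, though the paper is no more careful there.
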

\begin{proof}
	See Figure~\ref{fig:sweeping_stage}.
	Before round $\ceil{\log m}$, which can be seen as after round $\ceil{\log m}+1$, since all candidate positions belong to the same $2^{\ceil{\log m}}$-block, the statement holds (base case).
	Assuming that the statement holds after the round $(k+1)$,
	we prove that it also holds after the round $k$.
	Let $R_{k+1}$ and $R_k$ be the states of the array $R$ after the rounds $(k+1)$ and $k$, respectively.
	First, let us consider the case when surviving candidate positions $i$ and $j$ do not belong to the same $2^k$-block of $C$. 
	Obviously, $i$ and $j$ cannot belong to the same $2^{k-1}$-block.
	By the induction hypothesis, $i + m \leq j + R_{k+1}[j]$.
	Since $R_{k}[j] \geq R_{k+1}[j]$ by Lemma~\ref{lem:R_monotone}, $i + m \leq j + R_{k}[j]$.
	
	Now, let us consider the case when candidate positions $i$ and $j$ belong to the same $2^k$-block of $C$.
	During round $k$, for each $2^{k}$-block of $C$, Algorithm~\ref{alg:sweeping_stage_general} chooses as surviving candidate position $x_{k,b}$ which is the smallest index in the second half of the $2^{k}$-block.
	Thus, two surviving candidates positions $i$ and $j$ of the $b$-th $2^{k}$-block belong to different $2^{k-1}$-blocks iff $i < x_{k,b} \leq j$.
	For $T_i$ to be a surviving candidate after round $k$, it must be the case that $m + i \leq \LCP(T_{x_{k,b}}, P) + x_{k,b}$.
	For $T_j$, Algorithm~\ref{alg:sweeping_stage_general} updates $R_{k}[j]$ to $\LCP(T_{x_{k,b}}, P) - (j - x_{k,b})$.
	Substituting it into the previous inequality, we get
	$m + i \leq R_{k}[j] + (j - x_{k,b}) + x_{k,b} = R_{k}[j]+j$.
\end{proof}
\begin{figure}[t]
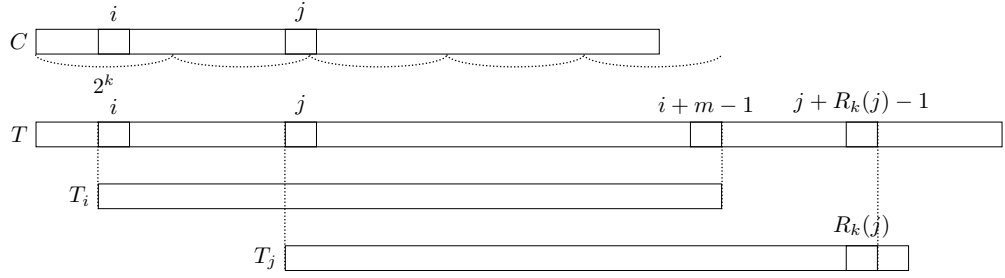

	\centering
	\includestandalone[scale=.82]{figures/sweeping_stage}
	\caption{Before round $k$, for two surviving candidates $T_i$ and $T_j$ such that $j - i \geq 2^k$, $i + m - 1 < j + R_k[j]$.}
	\label{fig:sweeping_stage}
\end{figure}
\begin{restatable}{lemma}{sweepinground}
\label{lem:sweeping_round}
	Each round of the while loop of Algorithm~\ref{alg:sweeping_stage_general} can be performed in $O(\xi_m^\mrm{t})$ time with $O(n)$ processors.
\end{restatable}
\begin{proof}
	See Figure~\ref{fig:sweeping}.
	Obviously it runs in constant time except for the computation at Line~\ref{alg:line:sweep_0}, where
	each processor attached to position $i$ is used for re-encoding $\widetilde{T}[i]$ into $\widetilde{T}_x[i-x+1]$ and comparing the value with $\widetilde{P}[i-x+1]$ for some $x$.
	Indeed, there is at most one $b$ such that $x_{k,b}+R[x_{k,b}] \le i < x_{k,b}+m$,
	since $x_{k,b-1}+m \le x_{k,b}+R[x_{k,b}]$ for all $b \in \{1,\dots,\ceil{m/2^k}\}$ by Lemma~\ref{lem:sweeping_processor_no_overlap}.
\end{proof}
\begin{figure}[tb]
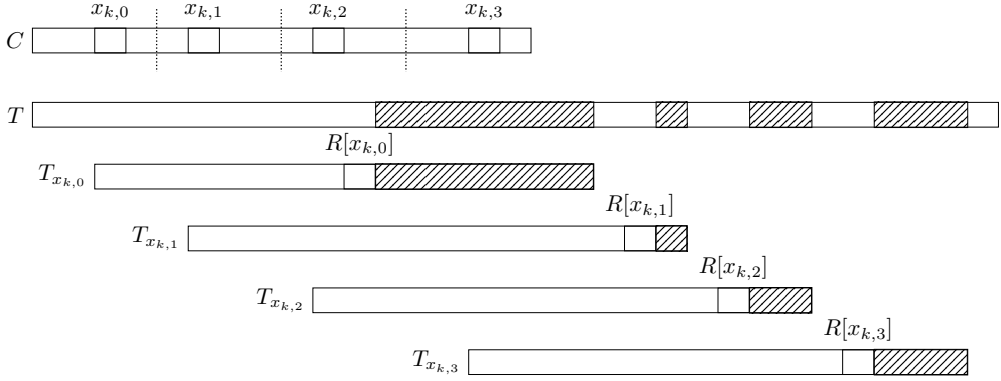

\centering
	\includestandalone[scale=0.82]{figures/sweeping}
	\caption{Illustrating the sweeping stage. The shaded regions of the text $T$ are referenced during round $k$. Those referenced regions do not overlap.}
	\label{fig:sweeping}
\end{figure}

\begin{restatable}{lemma}{sweepingcomplexity}
    \label{lem:sweeping_complexity}
    Given $\widetilde{P}$ and $\widetilde{T}$, the sweeping stage algorithm finds all pattern occurrences in $O(\xi_m^\mrm{t} \log m)$ time and $O(\xi_m^\mrm{w} \cdot m \log m)$ work on the P-CRCW PRAM.
\end{restatable}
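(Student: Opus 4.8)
Correctness is already in hand: the two displayed invariants on $C$ and $R$ are preserved by Lines~\ref{alg:line:sweep_0}--\ref{alg:line:sweep_4} of Algorithm~\ref{alg:sweeping_stage_general} (the elimination on Lines~\ref{alg:line:sweep_2}--\ref{alg:line:sweep_3} is exactly Lemma~\ref{lem:consistent_candidate}, and Line~\ref{alg:line:sweep_4} uses pairwise consistency of the surviving candidates), and in the final round $k=0$ every $2^k$-block is a singleton, so Algorithm~\ref{alg:check_SCER_parallel} computes $\LCP(T_x,P)$ exactly for each surviving $x$ and Line~\ref{alg:line:sweep_3} leaves $C[x]=\True$ iff $T_x\approx P$. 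So the plan is to bound the resources. The outer loop runs for exactly $\ceil{\log m}+1$ rounds ($k=\ceil{\log m},\dots,0$), so it suffices to show that a single round costs $O(\xi_m^\mrm{t})$ time and $O(\xi_m^\mrm{w}\cdot m)$ work.

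Within a round, the loop that fills $\mathit{Cand}$ and the two parallel loops over $i\in\{1,\dots,m\}$ each take $O(1)$ time and $O(m)$ work: every iteration does a constant amount of arithmetic, the concurrent writes to a cell $\mathit{Cand}[b]$ are resolved by the Priority rule to the smallest participating index, which is exactly $x_{b,k}$, and no two processors write to the same cell of $C$ or $R$. The only other work is the parallel loop over the blocks $b$, whose sole non-trivial step is, for $x=x_{b,k}$, the call on Line~\ref{alg:line:sweep_0}: it re-encodes the length-$(m-R[x])$ suffix of $\widetilde{T}$ with respect to $T[x\mathbin{:}]$, costing $O(\xi_m^\mrm{t})$ time and $O(\xi_m^\mrm{w}\cdot(m-R[x]))$ work, and then runs Algorithm~\ref{alg:check_SCER_parallel}, which by Lemma~\ref{lem:check_parallel} costs $O(1)$ time and $O(m-R[x])$ work. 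Since the blocks are processed in parallel, this loop takes $O(\xi_m^\mrm{t})$ time, and its total work is $O\bigl(\xi_m^\mrm{w}\sum_b (m-R[x_{b,k}])\bigr)$. Hence the lemma reduces to the claim announced before the statement and depicted in Figure~\ref{fig:sweeping}: in each round these quantities sum to $O(m)$.

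To prove that claim, note that the call for $x=x_{b,k}$ reads exactly the text positions $x+R[x],\dots,x+m-1$, where $R[x]$ denotes the value at the \emph{start} of the round; call this interval the \emph{reference region} of $x$. I would prove, by downward induction on $k$ (from $k=\ceil{\log m}$ to $k=0$), that the reference regions of distinct checked candidates of round $k$ are pairwise disjoint; since they all lie inside $T[1\mathbin{:}2m-1]$, their lengths then sum to less than $2m$, as required. For the inductive step, classify each checked candidate $x=x_{b,k}$ of round $k$ by which $2^k$-block of its enclosing $2^{k+1}$-block $B$ it occupies, and let $x_{B,k+1}$ be round $(k{+}1)$'s checked candidate in $B$. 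If $x$ lies in the second $2^k$-block of $B$, then $x\ge x_{B,k+1}$, so Line~\ref{alg:line:sweep_4} of round $k+1$ set $R[x]=R[x_{B,k+1}]-(x-x_{B,k+1})$ where $R[x_{B,k+1}]=\min(\LCP(T_{x_{B,k+1}},P),m)$ was computed exactly in that round; thus the left endpoint of $x$'s reference region is $x+R[x]=x_{B,k+1}+R[x_{B,k+1}]$, independent of $x$. If $x$ lies in the first $2^k$-block of $B$, then $x<x_{B,k+1}$ and $R[x]$ was untouched in round $k+1$; but $x$ survived that round, so Line~\ref{alg:line:sweep_3} did \emph{not} set $C[x]=\False$, which forces $R[x_{B,k+1}]\ge m-(x_{B,k+1}-x)$, i.e.\ $x_{B,k+1}+R[x_{B,k+1}]\ge x+m$. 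Putting the two sub-cases together, within a single $2^{k+1}$-block the reference region of a first-half checked candidate ends (at its position $+\,m-1$) strictly before a second-half checked candidate's region begins (at $x_{B,k+1}+R[x_{B,k+1}]$). The remaining comparisons are between checked candidates in \emph{adjacent} $2^{k+1}$-blocks, which are not directly related by any single line of round $k+1$; these are closed using the induction hypothesis for round $k+1$ together with the monotonicity of the $R$-values involved, and reconciling the block structure of round $k$ with that of round $k+1$ across block boundaries is the main obstacle — everything else is bookkeeping. Multiplying the per-round bounds by the $\ceil{\log m}+1$ rounds gives $O(\xi_m^\mrm{t}\log m)$ time and $O(\xi_m^\mrm{w}\cdot m\log m)$ work, as claimed.
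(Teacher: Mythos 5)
Your resource accounting and the reduction to ``the reference regions of the checked candidates of a round are pairwise disjoint'' match the paper's plan, and your within-block analysis (both sub-cases, including the survival inequality $x_{B,k+1}+R[x_{B,k+1}]\geq x+m$ extracted from Line~\ref{alg:line:sweep_3}) is sound. The problem is the part you defer: the comparison between round-$k$ checked candidates lying in \emph{different} $2^{k+1}$-blocks is not bookkeeping, and your induction hypothesis is too weak to close it. Concretely, let $x$ be a round-$k$ checked candidate lying in the \emph{second} $2^k$-block of its $2^{k+1}$-block $B$, and $y$ a round-$k$ checked candidate in a later block $B'$. You need $x+m\leq y+R_k[y]$. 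Your hypothesis (disjointness of the regions of the round-$(k{+}1)$ checked candidates) only yields $x_{B,k+1}+m\leq x_{B',k+1}+R_{k+1}[x_{B',k+1}]$, but $x$ may strictly exceed $x_{B,k+1}$ by up to $2^k-1$, and round $k{+}1$ gives you no inequality about such an $x$: Line~\ref{alg:line:sweep_3} never eliminates positions to the right of the checked candidate, so ``$x$ survived round $k{+}1$'' carries no information, and Line~\ref{alg:line:sweep_4} only tells you about $x+R_k[x]$, not about $x+m$. Monotonicity of $R$ does not bridge this; what is needed is knowledge about the pair $(x,y)$ itself, where $x$ was \emph{not} a checked candidate of round $k{+}1$. (A further, smaller defect: phrasing the hypothesis as region disjointness loses content when a checked candidate is a full match, since its reference region is empty and disjointness then says nothing, whereas the inequality form is still needed.)

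The paper resolves exactly this by strengthening the induction hypothesis to \emph{all} surviving candidate pairs: Lemma~\ref{lem:sweeping_processor_no_overlap} states that after round $k$, any two surviving positions $i<j$ not in the same $2^{k-1}$-block satisfy $i+m\leq j+R[j]$. With that invariant, your troublesome cross-block case is immediate (the pair $x<y$ already lies in different $2^k$-blocks after round $k{+}1$, so $x+m\leq y+R_{k+1}[y]\leq y+R_k[y]$), and the only case requiring work is the same-$2^k$-block case, which is your within-block argument. So your proof is repairable, but only by replacing your ``checked candidates only'' hypothesis with the paper's all-survivors inequality (or an equivalent multi-round descent, which amounts to the same thing); as written, the step you flag as ``the main obstacle'' is a genuine gap rather than routine reconciliation of block structures.
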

\begin{proof}
	The outer loop of Algorithm~\ref{alg:sweeping_stage_general} runs $O(\log m)$ times.
	By Lemma~\ref{lem:sweeping_round},
	each loop runs in $O(\xi_m^\mrm{t})$ time and $O(\xi_m^\mrm{w} \cdot m)$ processors.
	Thus, the total time is $O(\xi_m^\mrm{t} \cdot \log m)$ and total work is $O(\xi_m^\mrm{w} \cdot m \log m)$.
\end{proof}

By Theorem~\ref{th:preprocessing_complexity} and Lemmas~\ref{lem:search_dueling} and~\ref{lem:sweeping_complexity}, we obtain the main theorem.
Recall that when $n \geq 2m$, $T$ is cut into overlapping pieces of length $(2m - 1)$ and each piece is processed independently.
\begin{theorem}\label{th:search_general}
	Given a witness table, $\widetilde{P}$, and $\widetilde{T}$, the pattern searching solves the pattern searching problem under SCER in $O(\xi_m^\mrm{t} \cdot \log^2 m)$ time and $O(\xi_m^\mrm{w} \cdot n \log^2 m)$ work on the P-CRCW PRAM.
\end{theorem}
\section{Conclusion}

Dueling~\cite{vishkin1985optimal} is a powerful technique, which enables us to perform pattern matching efficiently.
In this paper, we have generalized the dueling technique for SCERs and have proposed a duel-and-sweep algorithm that solves the pattern matching problem for any SCER.
Our algorithm is the first algorithm to solve any SCER pattern matching problem in parallel.
Given a witness table, $\widetilde{P}$, and $\widetilde{T}$, we have shown that pattern searching under any SCER can be performed in $O(\xi_m^\mrm{t} \log^2 m)$ time and $O(\xi_m^\mrm{w} n \log^2 m)$ work on P-CRCW PRAM.
Given $\widetilde{P}$, a witness table can be constructed in $O(\xi_m^\mrm{t} \log^2 m)$ time and $O(\xi_m^\mrm{w} \cdot m \log^2 m)$ work on P-CRCW PRAM.
The third condition of $\approx$-encoding in \cref{def:encoding} ensures the generality of our duel-and-sweep algorithm for SCERs.
However, some standard encoding method of an SCER, namely the nearest neighbor encoding for order-preserving matching, does not fulfill the third condition.
We do not know if there is an alternative encoding for order-preserving matching that fulfills the condition and is computationally as cheap as the nearest neighbor encoding.
Nevertheless, Jargalsaikhan et al.~\cite{jargalsaikhan2018duel, jargalsaikhan2020parallel} succeeded in designing a parallel duel-and-sweep algorithm for order-preserving matching using the nearest neighbor encoding, which appears quite similar to the SCER algorithm proposed in this paper.
In our future work, we would like to investigate the relation between the encoding function and the dueling technique and further generalize the definition of encoding so that it becomes more inclusive.


\bibliography{reference}

\appendix

\section{Examples of encoding}
\label{app:sec:encoding}

\paragraph*{Prev-encoding for parameterized matching}
For a string $X$ of length $n$ over $\Sigma \cup \Pi$, where $\Pi$ is an alphabet of parameter symbols and $\Sigma$ is an alphabet of constant symbols,
the \emph{prev-encoding}~\cite{baker1996parameterized} for $X$, denoted by $\prev{X}$,
is defined to be a string over $\Sigma \cup \mathbb{N}$ of length $n$ such that for each $1 \leq i \leq n$,
\begin{align*}
	\prev{X}[i] = \begin{cases}
	X[i] & \text{if }X[i] \in \Sigma , \\
    0           & \text{if }X[i] \in \Pi \text{ and } X[i] \neq X[j] \text{ for } 1 \le j < i,\\
    i-k  		& \text{if }X[i] \in \Pi \text{ and } k = \max\{j \mid X[j]=X[i] \text{ and } 1 \le j < i\}.
  \end{cases}
\end{align*}

\begin{theorem}
Given a string $X$ of length $n$, $\prev{X}$ can be computed in $O(\log n)$ time
and $O(n \log n)$ work on P-CRCW PRAM.
Moreover, given $\prev{X}$, $\prev{X[x:n]}[i]$ can be computed in $O(1)$ time and $O(1)$ work.
\end{theorem}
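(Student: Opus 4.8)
The plan is to prove the two claims separately: first the parallel computation of $\prev{X}$ in $O(\log n)$ time and $O(n\log n)$ work, and then the $O(1)$-time re-encoding of a single position with respect to a suffix.

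For the first part, I would reduce the computation of $\prev{X}$ to a collection of independent \emph{predecessor} problems, one per parameter symbol. Fix a parameter symbol $\pi \in \Pi$ and let $i_1 < i_2 < \dots < i_t$ be the positions where $X[i_\ell] = \pi$. For every such position $i_\ell$ with $\ell \ge 2$ we need $\prev{X}[i_\ell] = i_\ell - i_{\ell-1}$, and $\prev{X}[i_1] = \infty$; constant positions are copied verbatim. The key step is therefore: (1) sort the pairs $(X[i], i)$ over all parameter positions $i$ by the symbol as primary key and by position as secondary key, so that positions carrying the same parameter symbol become contiguous and internally sorted; (2) within each contiguous block, each position reads its left neighbour to obtain $i_{\ell-1}$ (or detects it is the block's first element and writes $\infty$); (3) scatter the computed values back to the array $\prev{X}$ using the stored original indices. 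Steps (2) and (3) are $O(1)$-time, $O(n)$-work parallel operations on the P-CRCW PRAM. The sorting in step (1) dominates: using a work-optimal parallel integer/comparison sort — e.g. Cole's parallel merge sort — we sort $n$ items in $O(\log n)$ time and $O(n\log n)$ work, which yields the stated bounds. (If one wishes to avoid comparison sort, one can also bucket positions by symbol; but since $|\Pi|$ may be as large as $n$, a general $O(n\log n)$-work sort is the clean choice and matches the claim.)

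For the second part, observe that once $\prev{X}$ is available we already know, for each parameter position $i$, the previous occurrence $k = i - \prev{X}[i]$ of the same symbol (when $\prev{X}[i] \ne \infty$). The re-encoding $\prev{X[x:n]}[i]$ for $1 \le x \le i$ depends only on whether that previous occurrence $k$ still lies inside the window $[x:n]$, i.e.\ whether $k \ge x$. Concretely: if $X[i]$ is a constant symbol, then $\prev{X[x:n]}[i] = X[i]$; if $X[i]$ is a parameter symbol and $\prev{X}[i] = \infty$, or if $\prev{X}[i] \ne \infty$ but $i - \prev{X}[i] < x$ (the previous occurrence falls before the window), then $\prev{X[x:n]}[i-x+1] = \infty$; otherwise $\prev{X[x:n]}[i-x+1] = \prev{X}[i]$ unchanged, because removing a prefix does not change the \emph{distance} to a previous occurrence that survives. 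Each of these is a constant-time test on $O(1)$ already-computed values, so the re-encoding costs $O(1)$ time and $O(1)$ work, giving $\xi_n^{\mrm t} = \xi_n^{\mrm w} = O(1)$. The correctness of this case analysis follows directly from the definition of $\prev{\cdot}$ together with property~(2) of Definition~\ref{def:encoding}, which guarantees that prefixes of the encoding are encodings of prefixes; here we use the symmetric fact about suffixes that the definition's third condition makes consistent.

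The main obstacle is the first part: everything hinges on having a work-optimal parallel sort running in $O(\log n)$ time on the P-CRCW PRAM, and on arguing that the scatter/gather steps incur no write conflicts beyond what the priority-write model resolves for free (distinct original indices guarantee this for the scatter; the gather reads are conflict-free by construction). The second part is essentially a definition-chasing case split and should be routine once the invariant ``$\prev{X}[i]$ records the exact distance to the previous like-symbol occurrence'' is made explicit.
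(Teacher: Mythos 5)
Your proof is correct, and the second half (the $O(1)$ re-encoding test ``is the previous like-symbol occurrence still inside the window?'') is exactly the paper's argument, up to an index shift between positions in $X$ and positions in the suffix. For the first half, however, you take a genuinely different route. The paper does not sort: it builds an auxiliary string $X'$ in which constant positions are replaced by a maximal symbol $\infty$, and reduces the problem to computing $\mathit{Lmax}_{X'}[i]$, the rightmost position $j<i$ carrying the largest value $X'[j]\le X'[i]$; when $X[i]$ is a parameter symbol whose previous occurrence exists, that occurrence is precisely $\mathit{Lmax}_{X'}[i]$, and otherwise the symbol test $X[i]\ne X[\mathit{Lmax}_{X'}[i]]$ flags a first occurrence. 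The paper then cites its earlier work for an $O(\log n)$-time, $O(n\log n)$-work parallel algorithm for this ``all nearest larger-or-equal value to the left'' primitive, mirroring the use of Berkman et al.'s all-nearest-smaller-values routine in the parent-distance proof. Your approach instead groups equal symbols by sorting the pairs $(X[i],i)$ with a work-optimal parallel sort (e.g.\ Cole's merge sort), reads the left neighbour within each block, and scatters back; this is more self-contained and relies only on a textbook primitive, at the cost of not reusing the nearest-value machinery the paper already needs elsewhere. Both routes give the same $O(\log n)$ time and $O(n\log n)$ work, and both implicitly assume a total order on $\Pi$ (the paper states this assumption explicitly; you should too, since comparison sorting needs it). One small blemish: your appeal to properties (2)/(3) of Definition~\ref{def:encoding} in the second part is unnecessary and slightly misplaced --- those are properties of abstract $\approx$-encodings, whereas the case analysis you give is justified directly, as you also say, by the definition of $\prev{X}$ itself.
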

\begin{proof}
    Without loss of generality, we assume that $\Pi$ forms a totally ordered domain.
    We will construct the following string $X'$ from $X$.
    We define a new symbol, say $\infty$, such that, for any element $\pi \in \Pi$, $\pi$ is less than $\infty$. 
    For $1 \leq i \leq |X|$, $X'[i] = X[i]$ if $X[i] \in \Pi$ and $X'[i] = \infty$ if $X[i] \in \Sigma$. 
    For $X'$, we construct $\mathit{Lmax}_{X'}$, which is defined as $\mathit{Lmax}_{X'}[i]=j$ if $X'[j]=\max_{k < i} \{X'[k] \mid  X'[k] \le X'[i] \}$.
    We use the rightmost (largest) $j$ if there exist more than one such $j < i$. 
    If there is no such $j$, then we define $\mathit{Lmax}_{X'}[i] = 0$.
    Suppose that $X[i] \in \Pi$ for $1 \leq i \leq |X|$.
    After computing $\mathit{Lmax}_{X'}$, $\prev{X}[i] = i - \mathit{Lmax}_{X'}[i]$ if $X[i] = X[\mathit{Lmax}_{X'}[i]]$.
    If $\mathit{Lmax}_{X'}[i] = 0$ or $X[i] \neq X[\mathit{Lmax}_{X'}[i]]$,
    then $X[i]$ is the first occurrence of this letter.
    Thus, $\prev{X}$ can be computed from $\mathit{Lmax}_{X'}$ in $O(1)$ time and
	$O(n)$ work.
    Since $\mathit{Lmax}_{X'}$ can be computed in $O(\log n)$ time and
	$O(n \log n)$ work~\cite{jargalsaikhan2020parallel}, overall complexities are $O(\log n)$ time and $O(n \log n)$ work.
	
	Given $\prev{X}$, $\prev{X[x:n]}[i]$ can be computed in the following manner in $O(1)$ time and $O(1)$ work.
    \[
       \prev{X[x:n]}[i] = \begin{cases}
	    0 & \text{if } X[x+i-1] \in \Pi \text{ and } \prev{X}[x+i-1] \geq i, \\
        \prev{X}[x+i-1]         & \text{otherwise}.
        \end{cases}
    \]
\end{proof}

\paragraph*{Parent-distance encoding for cartesian-tree matching}
For a string $X$ over a totally ordered alphabet, its parent-distance encoding~\cite{park2019cartesian} for cartesian-tree matching $\mathit{PD}_X$ is defined as follows.
\begin{align*}
	\mathit{PD}_X[i] = \begin{cases}
    0         & \text{if there is no $j < i$ such that $X[j] \leq X[i]$},
\\	i - \max_{1 \leq j < i}\{j \mid X[j] \leq X[i]\} & \text{otherwise}. 
  \end{cases}
\end{align*}

\begin{theorem}
Given a string $X$ of length $n$, $\mathit{PD}_X$ can be computed in $O(\log n)$ time and $O(n \log n)$ work on P-CRCW PRAM.
Moreover, given $\mathit{PD}_X$, $\mathit{PD}_{X[x:n]}[i]$ can be computed in $O(1)$ time and $O(1)$ work.
\end{theorem}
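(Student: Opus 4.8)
The plan is to compute $\mathit{PD}_X$ with a balanced range-minimum tree --- essentially an all-nearest-smaller-values computation --- and to handle the re-encoding by a pointwise identity relating $\mathit{PD}_{X[x:n]}$ to $\mathit{PD}_X$.

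For the first part, observe that $\mathit{PD}_X[i]=i-j_i$, where $j_i$ is the \emph{largest} index $j<i$ with $X[j]\le X[i]$, and $\mathit{PD}_X[i]=0$ if no such $j$ exists. I would build a complete binary tree with leaves $1,\dots,n$ in which every internal node stores the minimum $X$-value over its leaf range; computing one level per round, this costs $O(\log n)$ time and $O(n)$ work. For each $i$ in parallel, a single processor walks up from leaf $i$ and inspects the left sibling of each node on the path. These left siblings decompose $\{1,\dots,i-1\}$ into $O(\log n)$ canonical ranges whose positions strictly decrease as one ascends, so the first of them (the one nearest to $i$) whose stored minimum is $\le X[i]$ contains $j_i$; the processor descends into that subtree once, always taking the right child as long as its minimum remains $\le X[i]$, and arrives at the leaf $j_i$ (if no range qualifies, $\mathit{PD}_X[i]=0$). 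Each such walk takes $O(\log n)$ time with one processor, and since each $i$ writes only to its own cell, all walks together run in $O(\log n)$ time and $O(n\log n)$ work; finally $\mathit{PD}_X[i]=i-j_i$. Any off-the-shelf $O(\log n)$-time, $O(n)$-work ANSV routine would do as well --- only the weaker $O(n\log n)$-work bound is claimed --- and the tie-breaking toward the larger $j$ can be rendered strict, if one prefers a black-box reduction, by comparing the pairs $(X[j],j)$ lexicographically.

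For the re-encoding, position $i$ of $X[x:n]$ is position $x+i-1$ of $X$, and passing to the suffix only deletes candidate parents $j<x$. I would establish
\[
  \mathit{PD}_{X[x:n]}[i]=\begin{cases}\mathit{PD}_X[x+i-1] & \text{if }\mathit{PD}_X[x+i-1]<i,\\ 0 & \text{otherwise,}\end{cases}
\]
which is evidently computable in $O(1)$ time and $O(1)$ work. If the global parent $p=(x+i-1)-\mathit{PD}_X[x+i-1]$ satisfies $p\ge x$ (equivalently $\mathit{PD}_X[x+i-1]<i$), then $p$ is the largest $j<x+i-1$ with $X[j]\le X[x+i-1]$ and it already lies inside the suffix, so it is also the suffix-local nearest such position, which gives the first case. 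Otherwise $p<x$, or no parent exists: by maximality of $p$, every index strictly between $p$ and $x+i-1$ --- in particular every position of the suffix that precedes $i$ --- carries a value greater than $X[x+i-1]$, so no valid candidate survives inside the suffix and the value is $0$.

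The step I expect to be the main obstacle is keeping the construction of $\mathit{PD}_X$ genuinely within $O(\log n)$ time and $O(n\log n)$ work while correctly realizing the ``largest qualifying $j$'' tie-breaking: one must argue that it suffices to descend into the single closest (rightmost) canonical left-range containing a value $\le X[i]$, and that one descent --- picking the rightmost qualifying leaf there --- yields $j_i$, rather than needing a separate descent at each level. Once this decomposition argument is in place, both the complexity and the correctness of $\mathit{PD}_X$ follow, and the re-encoding identity is then a short combinatorial check.
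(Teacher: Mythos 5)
Your proposal is correct, and its second half coincides exactly with the paper's argument: the re-encoding identity you state is precisely the paper's formula (the case $\mathit{PD}_X[x+i-1]=0$ is silently absorbed into your first case, which still returns $0$ correctly), and your maximality argument is the intended justification. The only real difference is in the first half: the paper disposes of it in one line by observing that computing $\mathit{PD}_X$ is an all-nearest-smaller-values computation and citing Berkman~et~al., whereas you build the ANSV solution from scratch with a range-minimum tournament tree, walking up from leaf $i$ over the $O(\log n)$ canonical left ranges and descending once into the nearest qualifying range. That construction is sound --- the worry you flag resolves itself, since the nearer canonical ranges contain no qualifying index, so the largest qualifying $j$ must lie in the nearest qualifying range, and one greedy right-preferring descent finds its rightmost qualifying leaf --- and you are more careful than the paper on one point: the definition uses $X[j]\le X[i]$ with ties broken toward the larger $j$, which the paper glosses over by calling it the ``nearest smaller value,'' while your lexicographic comparison of pairs $(X[j],j)$ makes the reduction to strict ANSV explicit. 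The citation route buys brevity (and in fact an $O(n)$-work ANSV bound, stronger than needed); your route buys a self-contained proof at the cost of reproving a known primitive, and both stay within the claimed $O(\log n)$ time and $O(n\log n)$ work.
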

\begin{proof}
For $1 \leq i \leq n$, $\mathit{PD}_X[i]$ is the nearest smaller value to the left of $X[i]$.
Since the all-smaller-nearest-value problem can be solved in $O(\log n)$ time and $O(n \log n)$ work on P-CRCW PRAM by Berkman et al.~\cite{berkman1993optimal}, $\mathit{PD}_X$ can be  computed in $O(\log n)$ time and $O(n \log n)$ work on P-CRCW PRAM.

Given $\mathit{PD}_X$, $\mathit{PD}_{X[x:n]}[i]$ can be computed in the following manner in $O(1)$ time and $O(1)$ work.
\[
	\mathit{PD}_{X[x:n]}[i] = \begin{cases}
	0 & \text{if } \mathit{PD}_X[x+i-1] \geq i, \\
    \mathit{PD}_X[x+i-1]         & \text{otherwise}.
  \end{cases}
\]
\end{proof}

\end{document}